\theoremstyle{definition} \newtheorem{define} {Definition} [section]
\theoremstyle{definition} 
\newtheorem {theorem} {Theorem}
\newtheorem {lemma} {Lemma}
\newcommand{\kb}[1]{\mathbf{\left[#1\right]}}
\newcommand{\trd}[1]{\left|\left| #1 \right| \right|}
\newcommand{\MR}{\texttt{Measure and Resend}}
\newcommand{\R}{\texttt{Reflect}}
\newcommand{\rew}{\texttt{\textbf{Rw}}}
\newcommand{\protpm}{\Pi^{\texttt{SQKD}}}
\newcommand{\protp}{\widetilde{\Pi}^{\texttt{SQKD}}}
\newcommand{\protent}{\Pi^{*}}
\newcommand{\conj}[1]{#1^*}
\begin{document}
\title{Key-Rate Bound of a Semi-Quantum Protocol Using an Entropic Uncertainty Relation}
\author{
\IEEEauthorblockN{Walter O. Krawec}
\IEEEauthorblockA{Computer Science \& Engineering Department\\
University of Connecticut\\
Storrs, CT 06268\\
Email: walter.krawec@uconn.edu}
}

\maketitle

\begin{abstract}
In this paper we present a new proof technique for semi-quantum key distribution protocols which makes use of a quantum entropic uncertainty relation to bound an adversary's information.  Our new technique provides a more optimistic key-rate bound than previous work relying only on noise statistics (as opposed to using additional mismatched measurements which increase the noise tolerance of this protocol, but at the cost of requiring four times the amount of measurement data).  Our new technique may hold application in the proof of security of other semi-quantum protocols or protocols relying on two-way quantum communication.
\end{abstract}

\section{Introduction}

Quantum Key Distribution (QKD) protocols allow for the establishment of a secret key between two parties Alice ($A$) and Bob ($B$) which is secure even against an all-powerful adversary Eve ($E$) - that is, an adversary bounded only by the laws of physics and not by any computational assumptions as is required when using classical communication only.  Numerous QKD protocols have been proposed, many with rigorous proofs of unconditional security.  The reader is referred to \cite{QKD-survey} for a general survey.

However, these QKD protocols, and their security analysis, require both $A$ and $B$ to be ``quantum capable.''  Namely, both $A$ and $B$ must be equipped with devices capable of manipulating quantum resources in certain, arbitrary, ways (e.g., preparing and measuring qubits in certain bases).  In 2007, Boyer et al., in \cite{SQKD-first-PRL} introduced the \emph{semi-quantum} model of cryptography whereby only $A$ was required to be quantum while $B$ was allowed to be very limited and ``classical'' in his capabilities.  These semi-quantum key distribution (SQKD) protocols are interesting to study theoretically as they attempt to answer the question `` how quantum does a protocol need to be to gain an advantage over its classical counterpart?''  There are also potential practical benefits to studying these protocols: for example, $B$'s device could be cheaper to manufacture; alternatively, one can consider designing a QKD infrastructure more robust to technical failure - indeed, if a device ever breaks down, one may switch to a ``semi-quantum'' mode and continue secure operations until the device is fully repaired.

SQKD protocols, however, require a two-way quantum communication channel (one which allows $A$ to send qubits to $B$ who then sends qubits back to $A$) greatly increasing the complexity of their security analysis.  Though several SQKD protocols have been proposed (see \cite{SQKD-first-PRL,SQKD-second,SQKD-lessthan4,SQKD-Single-Security,SQKD-MultiUser,SQKD-3,SQKD-cl-A,SQKD-4,SQKD-no-measure1,SQKD-no-measure2,SQKD-mirror} for a few), until 2015, most were proven only to be \emph{robust} - a notion introduced in \cite{SQKD-first-PRL} which says any attack which causes $E$ to gain non-zero information on the key must induce a disturbance which can be detected with non-zero probability.  Some authors considered security against individual attacks \cite{SQKD-information} (attacks whereby $E$ is forced to measure her quantum ancilla before the protocol concludes).  It wasn't until 2015, that rigorous proofs of security became available in \cite{SQKD-Krawec-SecurityProof,SQKD-Krawec-ReflectionSecurity,SQKD-zhang2016single}.

In a recent work \cite{QKD-Tom-Krawec-Arbitrary}, we showed that the original SQKD protocol of Boyer et al., has a noise tolerance of $11\%$ - exactly the same as the fully-quantum BB84 protocol.  Our result in \cite{QKD-Tom-Krawec-Arbitrary}, however, required the use of numerous measurements, including \emph{mismatched measurements} \cite{QKD-Tom-First,QKD-Tom-KeyRateIncrease}.  Ultimately, to compute the key-rate of the Boyer et al., protocol, using our technique in that paper, one must look at over 12 different measurement statistics and then evaluate a series of lengthy equations. (Indeed, our key-rate equation for the SQKD protocol spanned numerous pages!)

In this work, we revisit this semi-quantum protocol and derive a simpler, and far more elegant (in the author's opinion) proof of security using a quantum uncertainty bound to evaluate the von Neumann entropy of the resulting quantum system.  Our new bound does not use mismatched measurements (only error rates) and, so, the noise tolerance is not as high as in \cite{QKD-Tom-Krawec-Arbitrary}; however our new result is higher than previous best-known results for this protocol without mismatched measurements.  Furthermore, the technique we present here may be simpler to adapt to other SQKD protocols than the technique using mismatched measurements - especially for higher-dimensional protocols (such as \cite{QKD-vlachou2017quantum}) where the technique of mismatched measurements can become intractable.

There are several contributions made in this work, many of which we expect would hold great application outside the scope of this paper.  First, we show that for \emph{any} semi-quantum protocol, it is sufficient to consider a restricted form of collective attack.  Second, we show an entirely new approach to proving security of semi-quantum protocols; we show how to convert a particular SQKD protocol into an equivalent entanglement based version and we derive a new key-rate bound which does not require the use of numerous mismatched measurement statistics and which produces a higher noise tolerance than previous work without these statistics (along with a far simpler key-rate expression).  Note that, in \cite{QKD-twoway2}, a technique of converting certain two-way QKD protocols into equivalent entanglement based versions was shown; however their result could only be applied to protocols where $B$'s output is independent of his input averaged over all of his operations - this property is sadly lacking in the semi-quantum model and so a new method is required which we introduce in this paper.  Third, our proof shows a new and interesting application of a quantum uncertainty bound to the semi-quantum model of cryptography and also an interesting application of a continuity bound on conditional von Neumann entropy which may be of great use when proving security of new protocols in the semi-quantum model (or, more generally, for protocols relying on a two-way channel which may not hold certain symmetry properties).

\subsection{Notation}
We assume the reader is familiar with basic quantum information theory and so here we will only introduce our notation and a few general concepts; for a general survey see \cite{QC-Intro}.  The \emph{computational $Z$ basis} is defined to be $\{\ket{0}, \ket{1}\}$ while the \emph{Hadamard $X$ basis} is $\{\ket{+}, \ket{-}\}$, where $\ket{\pm} = \frac{1}{\sqrt{2}}(\ket{0}\pm\ket{1})$.

We denote by $H(p_1, \cdots, p_n)$ to be the Shannon entropy of $p_1, \cdots, p_n$.  If $A$ and $B$ are random variables, then $H(A|B)$ is the conditional Shannon entropy of $A$ conditioned on $B$.  By $h(x)$ we mean the binary entropy function: $h(x) = H(x, 1-x) = -x\log x - (1-x)\log (1-x)$.  All logarithms in this paper are base two.

A \emph{density operator} is a Hermitian positive semi-definite operator of unit trace.  If $\rho$ is a density operator acting on Hilbert space $\mathcal{H}_A\otimes\mathcal{H}_B$, we often write $\rho_{AB}$.  In this case, we write $\rho_B$ to mean the operator resulting from tracing out $A$; i.e., $\rho_B = tr_A\rho_{AB}$.  Similarly when the operator acts on larger systems.  We also will write $\mathcal{H}_{AB}$ to denote $\mathcal{H}_{A}\otimes\mathcal{H}_{B}$.

Given density operator $\rho_{AB}$ we write $S(AB)_\rho$ to mean the von Neumann entropy of $\rho_{AB}$.  We write $S(A|B)_\rho$ to mean the conditional von Neumann entropy: $S(A|B)_\rho = S(AB)_\rho - S(B)_\rho$.  If the context is clear, we will forgo writing the subscript ``$\rho$.''

Given an operator $A$, we write $\trd{A}$ to mean the trace norm of $A$.  If $A$ is Hermitian and finite dimensional, then $\trd{A} = \sum_i|\lambda_i|$, where $\{\lambda_i\}$ are the eigenvalues of $A$.

If $z \in \mathbb{C}^{m\times n}$, then we write $z^*$ to mean the conjugate transpose of $z$.  Also, we define $\mathbb{D} = \{z\in\mathbb{C} \text{ } | \text{ } |z| \le 1\}$.  Finally, we write $\kb{a,b}_{AB}$ to mean $\ket{a,b}\bra{a,b}_{AB}$.

\subsection{(S)QKD Security}
A (S)QKD protocol operates in two stages: first a \emph{quantum communication stage} whereby users utilize the communication resources available to them (typically a quantum channel and an authenticated classical channel) to establish a \emph{raw-key} which is a string of $0$'s and $1$'s which is partially correlated and partially secret.  In general, this yields a classical-classical-quantum state of the form:
\begin{equation}\label{eq:ccq-state}
\rho_{ABE} = \sum_{a,b\in\{0,1\}^n}\kb{a,b}_{AB}\otimes\rho_E^{(a,b)},
\end{equation}
where the $A$ and $B$ register represent $A$ and $B$'s raw-key respectively, while $\rho_E^{(a,b)}$ (which is not necessarily of unit trace) represents the state of $E$'s memory in the event $A$ and $B$ have a raw-key of $a$ and $b$ respectively.  Following the quantum communication stage, a classical stage consisting of error correction and privacy amplification is run producing a secret key of size $\ell(n)$ bits which may then be used for other cryptographic protocols.

An important question is, given certain observations on $E$'s attack (e.g., the error rate), how large is $\ell(n)$?  For collective attacks (attacks whereby $E$ performs the same attack operation each iteration however is free to postpone measurement of her ancilla until any future time of her choice), Equation \ref{eq:ccq-state} takes on the simpler form:
\[
\rho_{ABE} = \left(\sum_{a,b\in\{0,1\}}\kb{a,b}_{AB}\otimes\rho_E^{(a,b)}\right)^{\otimes n},
\]
in which case the Devetak-Winter key-rate expression \cite{QKD-winter-keyrate} may be employed which states:
\begin{equation}\label{eq:keyrate}
r := \lim_{n\rightarrow \infty} \frac{\ell(n)}{n} = \inf[S(A|E) - H(A|B)],
\end{equation}
where the infimum is over all collective attacks which induce the observed noise statistics.  Computing a bound on this expression is the key critical element of any (S)QKD security proof \cite{QKD-survey}.

\section{The Protocol}

The protocol we consider is the original SQKD protocol of Boyer et al., introduced in \cite{SQKD-first-PRL,SQKD-second}.  This protocol, being a semi-quantum one, assumes that $A$ is \emph{fully quantum} in that she can prepare and measure qubits in arbitrary bases; however $B$ is \emph{classical} in that he can only directly work with the computational $Z$ basis.  In more detail, a SQKD protocol utilizes a two-way quantum channel.  We call the channel connecting $A$ to $B$ the \emph{forward channel} and the channel connecting $B$ to $A$ the \emph{reverse channel}.  Each iteration of the quantum communication stage, $A$ will prepare and send a qubit.  $B$ is then restricted to choosing between two operations: $\MR$ or to $\R$.  If he chooses $\MR$, he will subject the incoming qubit to a $Z$ basis measurement and prepare a new qubit in the same state he observed (i.e., if his measurement result is $\ket{r}$, for $r \in \{0,1\}$, he will send a qubit $\ket{r}$ back to $A$); if he chooses $\R$ he will completely disconnect from the quantum channel, allowing the qubit to pass through his lab undisturbed, and return to $A$ (in essence, if $B$ chooses $\R$, $A$ is ``talking to herself'').

The protocol we consider, and which we denote as $\protpm$, operates as follows:
\begin{enumerate}
  \item $A$ chooses to send a qubit of the form $\ket{0}$, $\ket{1}$, $\ket{+}$, or $\ket{-}$, choosing randomly.
  \item $B$ will choose to $\MR$ or to $\R$.  If he chooses $\MR$ he will save his measurement result in a classical register to serve as his potential raw-key bit for this iteration.
  \item $A$ will measure in the same basis she originally used to send.
  \item $A$ will disclose her choice of basis; $B$ will disclose his choice of operation ($\MR$ or $\R$).  This is done using an authenticated classical channel.
  \item If $A$ used the $Z$ basis and if $B$ chose to $\MR$, then they will use this iteration to contribute towards their raw-key.  $A$ will use her initial preparation choice as her key bit (equivalently, she may use her measurement result at the end - our security analysis will apply to both cases).  Other iterations, along with a suitably sized random subset of these ``raw-key'' iterations, may be used to estimate the error rate in the channel.  In particular, $A$ and $B$ may estimate the $Z$ basis error rate in the forward, reverse, and joint channel.  They may also observe the $X$ basis error rate in the joint channel (but not the forward or reverse separately since $B$ is unable to prepare or measure in the $X$ basis).
\end{enumerate}

It is not difficult to see that the protocol is correct.  We analyze its security by determining a new lower-bound on the Devetak-Winter key-rate expression (Equation \ref{eq:keyrate}) for this protocol.

\section{Security Proof}
We prove security against collective attacks in this paper - we will comment on general attacks later.  Collective attacks are those where $E$ applies the same attack operation each iteration of the protocol, but is free to postpone measurement of her ancilla until any future time of her choice.  In the semi-quantum model, where an attacker has two opportunities to attack a qubit each iteration, a collective attack is a pair of unitary operators $(U_F, U_R)$, each acting on the Hilbert space $\mathcal{H}_T\otimes\mathcal{H}_E$ (here $\mathcal{H}_T$ is the two-dimensional space modeling the qubit ``transit'' space while $\mathcal{H}_E$ is $E$'s quantum ancilla).  The operator $U_F$ is applied in the Forward channel (as a qubit travels from $A$ to $B$) while $U_R$ is applied in the Reverse channel (when the qubit returns from $B$ to $A$).

Our proof of security follows three steps.  First, we will prove that for \emph{any} semi-quantum protocol, it is sufficient to consider a particular ``restricted'' collective attack which is easier to analyze mathematically, but does not cause $E$ to lose attack power.  Second, using this result, we show how to convert the protocol of interest into a mathematically equivalent entanglement based version .  Third, we use a quantum uncertainty bound and a continuity bound on conditional entropy to analyze the entanglement based version - security of the SQKD protocol will then follow.  See Figure \ref{fig:reductions}.

\begin{figure}
  \centering
  \includegraphics[width=220pt]{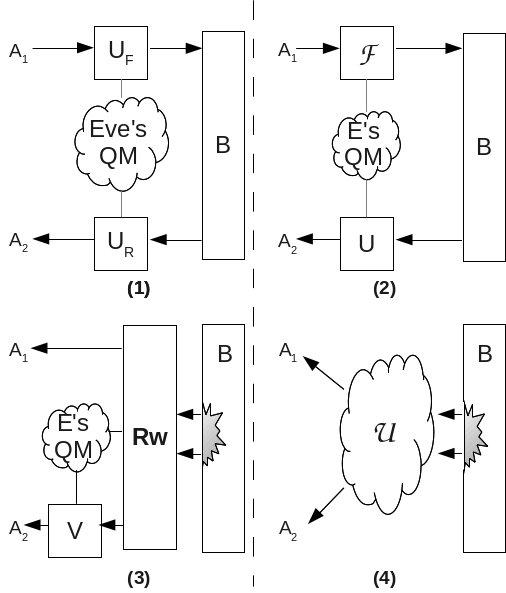}
\caption{Our security proof in pictures (note: QM = ``Quantum Memory'').  (1) is an arbitrary collective attack against the actual SQKD protocol (with classical Bob) that we wish to analyze.  We show that for every such attack, there is an equivalent \emph{restricted} attack (2).  Next, we show that, mathematically, analyzing (2) is the same as analyzing a different protocol where a quantum Bob prepares pairs of qubits (and his qubit source is partially influenced by Eve, represented here as a shaded region in $B$'s lab) and sends them to Alice; namely, for every restricted attack against the SQKD protocol, there is an equivalent attack against our new protocol where first Eve ``rewinds'' the channel state with operator $\rew$ (simulating the system had $A$ sent a qubit first instead of $B$), and then attacks one of the qubits.  Of course analyzing (4), where Eve is not restricted to performing this ``rewind'' attack, only gives her more power.  Thus security in (4) $\Rightarrow$ (3) $\Rightarrow$ (2) $\Rightarrow$ (1), our original goal.}\label{fig:reductions}
\end{figure}

\subsection{Restricted Attacks}

In \cite{SQKD-Single-Security}, we showed that for any \emph{single-state} SQKD protocol (i.e., those where $A$ is restricted to sending a single, publicly known, qubit state each iteration, typically $\ket{+}$ \cite{SQKD-lessthan4}), to prove security against collective attacks it is sufficient to consider security against a ``restricted'' attack whereby Eve, in the forward channel, need only bias Bob's measurement result; in the reverse channel, she applies an arbitrary unitary operator.  That is, it is not required that she perform an arbitrary unitary operator in the forward channel, entangling the qubit with her private quantum memory.  Such attacks are easier to analyze - and have been used in \cite{SQKD-Krawec-ReflectionSecurity,SQKD-zhang2016single} to show security of several different single-state protocols - however, as shown in \cite{SQKD-Single-Security}, the result is only correct for single state protocols.  The original SQKD protocol of Boyer et al., which we are considering in this work (i.e., $\protpm$), is a \emph{multi-state} protocol, one where $A$ prepares different states each iteration, choosing randomly each time.  However, it remained an open question as to whether or not some other form of restricted attack might be constructed for multi-state protocols.  We answer this question in the affirmative.

\begin{define}\label{def:rest-atk}
Let $\mathcal{B} = \{\ket{v_0}, \ket{v_1}\}$ be an orthonormal basis.  A \emph{multi-state restricted collective attack with respect to $\mathcal{B}$} is a tuple $(q_0, q_1, \eta_0, \eta_1, U)$, where $q_0, q_1 \in [0,1]$; $\eta_0, \eta_1 \in \mathbb{D}$ subject to the restriction that $q_0\eta_1\sqrt{1-q_1^2} + q_1\conj{\eta_0}\sqrt{1-q_0^2} = 0$; and $U$ is a unitary operator acting on $\mathcal{H}_T\otimes\mathcal{H}_E$.  The attack consists of the following actions:
\begin{enumerate}
  \item When $E$ first captures the qubit from $A$ in the forward channel, she applies the operator $\mathcal{F}$, acting on $\mathcal{H}_T\otimes\mathcal{H}_E$ which acts as follows:
\begin{align}
\mathcal{F}\ket{v_0} &= q_0\ket{0,0}_{TE} + \sqrt{1-q_0^2}\ket{1,e}_{TE}\label{eq:rest-atk-forward}\\
\mathcal{F}\ket{v_1} &= \sqrt{1-q_1^2}\ket{0,f}_{TE} + q_1\ket{1,0}_{TE},\notag
\end{align}
where:
\begin{align}
\ket{e} &= \eta_0\ket{0}_E + \sqrt{1-|\eta_0|^2}\ket{1}_E\label{eq:rest-atk-e}\\
\ket{f} &= \eta_1\ket{0}_E + \sqrt{1-|\eta_1|^2}\ket{1}_E.\label{eq:rest-atk-f}
\end{align}
Note that it is not difficult to see, considering the restrictions on the values $\eta_0, \eta_1, q_0,$ and $q_1$, that $\mathcal{F}$ is an isometry and may therefore be easily extended to a unitary operator; thus this is an operation $E$ can perform within the laws of quantum physics.
\item When the qubit returns from $B$ (on its way back to $A$ in the reverse channel), $E$ captures it again, and applies the unitary operator $U$.
\end{enumerate}
\end{define}

When the context is clear, we will simply call the above attack a \emph{restricted attack} as opposed to its longer title.  The following theorem proves that it is sufficient to consider these restricted attacks when proving security of \emph{any} semi-quantum protocol against arbitrary collective attacks.

\begin{theorem}\label{thm:rest-atk}
Let $\mathcal{B}=\{\ket{v_0},\ket{v_1}\}$ be an arbitrary orthonormal basis. For every collective attack $\mathcal{C} = (U_F, U_R)$, there exists a restricted attack $\mathcal{R} = (q_0, q_1, \eta_0, \eta_1, U)$ such that, for any SQKD protocol with quantum $A$ and classical $B$, the following are true:
\begin{enumerate}
  \item $A$ and $B$ cannot distinguish between attack $\mathcal{C}$ and $\mathcal{R}$.
  \item $E$'s final quantum system is the same regardless of whether she used $\mathcal{C}$ or $\mathcal{R}$.
  \item The key-rate is equal under both attacks.
\end{enumerate}
\end{theorem}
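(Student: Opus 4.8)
The plan is to prove a single stronger statement from which all three claims follow simultaneously: that for the right choice of $\mathcal{R}$, the entire joint classical--quantum state produced in one iteration is \emph{identical} under $\mathcal{C}$ and $\mathcal{R}$ --- identical for every preparation Alice may send, for each of Bob's two choices ($\MR$ or $\R$), and, in the $\MR$ case, conditioned on every measurement outcome. Since Alice's and Bob's observable statistics, Eve's reduced density operator, and the Devetak--Winter quantity of Equation \ref{eq:keyrate} are all functions of this state, claims 1, 2, and 3 are then immediate corollaries. Because collective attacks are i.i.d.\ across iterations, it suffices to establish this for a single iteration.

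The heart of the argument is a factorization of the forward attack. Writing $V_F\ket{\psi} = U_F\ket{\psi,0}_{TE}$ and expanding its output in the computational ($Z$) basis of the transit register --- the only basis the classical Bob ever touches --- I have $V_F\ket{v_0} = \ket{0}_T\ket{\alpha_0} + \ket{1}_T\ket{\beta_0}$ and $V_F\ket{v_1} = \ket{0}_T\ket{\alpha_1} + \ket{1}_T\ket{\beta_1}$ for unnormalized ancillas $\alpha_i,\beta_i$. I would set $q_0 = \|\alpha_0\|$, $q_1 = \|\beta_1\|$ (so that $\sqrt{1-q_1^2} = \|\alpha_1\|$, $\sqrt{1-q_0^2} = \|\beta_0\|$ by normalization) and then choose $\eta_0,\eta_1$ so that $\ket{e},\ket{f}$ of Equations \ref{eq:rest-atk-e}--\ref{eq:rest-atk-f} satisfy $\braket{0|f} = \braket{\alpha_0|\alpha_1}/(\|\alpha_0\|\|\alpha_1\|)$ and $\braket{e|0} = \braket{\beta_0|\beta_1}/(\|\beta_0\|\|\beta_1\|)$. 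The claim is then that $V_F = \tilde{W}\,\mathcal{F}$, where $\mathcal{F}$ is the restricted forward operator of Definition \ref{def:rest-atk} and $\tilde{W} = \ket{0}\bra{0}_T\otimes W_0 + \ket{1}\bra{1}_T\otimes W_1$ is \emph{block-diagonal} in the $Z$-basis of $T$, the unitaries $W_0,W_1$ on $\mathcal{H}_E$ being the ones sending the standardized ancillas $\{\ket{0}_E,\ket{f}\}$ to the true normalized ancillas $\{\hat{\alpha}_0,\hat{\alpha}_1\}$ and $\{\ket{e},\ket{0}_E\}$ to $\{\hat{\beta}_0,\hat{\beta}_1\}$. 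Such unitaries exist precisely because the two Gram matrices agree, which is exactly what the inner-product conditions above enforce.

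With the reverse operator of the restricted attack defined as $U = U_R\tilde{W}$, both of Bob's branches are reconciled at once. The $\R$ branch is trivial: the returning coherent state is $U\mathcal{F}\ket{\psi} = U_R\tilde{W}\mathcal{F}\ket{\psi} = U_R V_F\ket{\psi}$. The $\MR$ branch is where the block-diagonal structure of $\tilde{W}$ does all the work: since $\tilde{W}$ commutes with each $Z$-measurement projector $\ket{r}\bra{r}_T\otimes I_E$, a $Z$-measurement of $T$ after $\tilde{W}\mathcal{F}$ has the same outcome statistics and residual register state $\ket{r}$ as a $Z$-measurement after $\mathcal{F}$ alone, followed by the conditional ancilla rotation $W_r$. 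Resending $\ket{r}$ and applying $U_R$ then reproduces exactly the action of $U = U_R\tilde{W}$ on the $\ket{r}$-sector, so the post-measurement state matches that of the restricted attack outcome by outcome, including for superposed preparations (where norm preservation by $W_r$ guarantees the measurement probabilities coincide). Thus one and the same reverse unitary $U$ serves both branches, and the per-iteration states agree.

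I expect the main obstacle to be precisely this simultaneous reconciliation: a single reverse unitary must undo the forward entanglement in the fully coherent $\R$ branch while remaining consistent with the decohered, measurement-collapsed $\MR$ branch across both outcomes and both of Alice's basis states. The resolution --- and the conceptual crux --- is recognizing that the correcting operator $\tilde{W}$ can be taken block-diagonal in the $Z$-basis, the only degree of freedom the classical Bob interacts with, so the same $W_r$ can both serve outcome $r$ and be absorbed into the reverse channel. The remaining work is routine but must be checked: verifying that the constraint $q_0\eta_1\sqrt{1-q_1^2} + q_1\conj{\eta_0}\sqrt{1-q_0^2} = 0$ in Definition \ref{def:rest-atk} is not an extra hypothesis but is \emph{equivalent} to the forward isometry condition $\braket{\alpha_0|\alpha_1} + \braket{\beta_0|\beta_1} = 0$ forced by $\braket{v_0|v_1}=0$ (indeed the left-hand side equals $\braket{\alpha_0|\alpha_1}+\braket{\beta_0|\beta_1}$ after substituting the definitions of $\eta_0,\eta_1$), together with a brief treatment of the degenerate cases $q_0,q_1\in\{0,1\}$, where some conditional ancilla carries zero weight and the corresponding column of $W_0$ or $W_1$ may be completed arbitrarily.
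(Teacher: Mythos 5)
Your proposal is correct and follows essentially the same route as the paper's proof: the factorization $U_F = \tilde{W}\mathcal{F}$ with a correcting operator block-diagonal in the $Z$ basis of the transit register (your $\tilde{W}$ is exactly the paper's $V$, your $\eta_i$ and $q_i$ match its choices, and the reverse operator is $U = U_R V$ in both), with the block-diagonality doing the work of reconciling the $\MR$ and $\R$ branches. The only cosmetic differences are that you invoke the Gram-matrix criterion for the existence of $W_0, W_1$ where the paper constructs the orthogonal complements $\ket{g_0},\ket{g_1}$ explicitly, and you argue the $\MR$ case by commutation with the measurement projectors where the paper writes out the post-measurement density operators.
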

\begin{proof}
Let $\mathcal{B}$ be given and fix a collective attack $\mathcal{C}=(U_F, U_R)$, where $U_F$ is a unitary operator applied in the forward channel while $U_R$ is a unitary operator applied in the reverse.  We will construct a restricted attack $\mathcal{R}$ satisfying the required conditions.  Without loss of generality, we may assume $E$'s system is cleared to some ``zero'' state $\ket{\chi}_E$ at the start of the attack.  In this case, we may write $U_F$'s action on basis states as follows:
\begin{align*}
U_F\ket{v_0,\chi} &= \alpha\ket{0,e_0}_{TE} + \sqrt{1-\alpha^2}\ket{1,e_1}_{TE}\\
U_F\ket{v_1,\chi} &= \sqrt{1-\beta^2}\ket{0,e_2}_{TE} + \beta\ket{1,e_3}_{TE}
\end{align*}
where $\alpha,\beta\in[0,1]$ (any phase may be absorbed into the vectors $\ket{e_i}$) and the $\ket{e_i}$ are arbitrary normalized (though not necessarily orthogonal) states in $\mathcal{H}_E$.  Let $q_0 = \alpha$ and $q_1 = \beta$ for our restricted attack.

Unitarity of $U_F$ imposes the following condition:
\[
\alpha\sqrt{1-\beta^2}\braket{e_0|e_2} + \beta\sqrt{1-\alpha^2}\braket{e_1|e_3} = 0.
\]
Let $\eta_0 = \braket{e_3|e_1}$ and $\eta_1 = \braket{e_0|e_2}$ for our restricted attack $\mathcal{R}$ (this clearly satisfies the required restrictions on $q_i$ and $\eta_i$).  Now, first consider the case that $|\eta_i| < 1$ (the case when $|\eta_i| = 1$ for one or both $i$ will be considered afterwards).  Let $V$ be the operator acting on states in $\mathcal{H}_T\otimes\mathcal{H}_E$ as follows:
\begin{align*}
V\ket{0,0}_{TE} = \ket{0,e_0} && V\ket{1,0}_{TE} = \ket{1,e_3}\\
V\ket{0,1}_{TE} = \ket{0,g_0} && V\ket{1,1}_{TE} = \ket{1,g_1}
\end{align*}
where:
\begin{align*}
\ket{g_0} = \frac{\ket{e_2} - \eta_1\ket{e_0}}{\sqrt{1-|\eta_1|^2}}, &&\ket{g_1} = \frac{\ket{e_1} - \eta_0\ket{e_3}}{\sqrt{1-|\eta_0|^2}}.
\end{align*}
We claim that $V$ is an isometry.  First, it is clear that each $\ket{g_i}$ is normalized.  Indeed:
\begin{align*}
\braket{g_0|g_0} &= \frac{1+|\eta_1|^2-2Re(\eta_1\braket{e_2|e_0})}{1-|\eta_1|^2}\\
&=\frac{1+|\eta_1|^2-2Re|\eta_1|^2}{1-|\eta_1|^2} = 1,
\end{align*}
where, above, we used the fact that $\eta_1 = \braket{e_0|e_2}$.  A similar computation yields $\braket{g_1|g_1} = 1$.  What remains to be shown is that $\braket{e_0|g_0} = \braket{e_3|g_1} = 0$.  But this is clear:
\[
\braket{e_0|g_0} = \frac{\braket{e_0|e_2} - \eta_1}{\sqrt{1-|\eta_0|^2}} = \frac{\eta_1 - \eta_1}{\sqrt{1-|\eta_0|^2}} = 0,
\]
and similarly for $\braket{e_3|g_1}$.  We conclude, therefore, that $V$ is an isometry which may be extended to a unitary operator (its action on states not shown above is irrelevant).  In the following, we will assume $V$ is simply unitary.

To simplify notation in the remainder of this proof, let $N_i = \sqrt{1-|\eta_i|^2}$.  By linearity of $V$, it follows that:
\begin{align*}
V\ket{1,e}_{TE} &= V(\eta_0\ket{1,0} + N_0\ket{1,1})\\
&= \eta_0\ket{1,e_3} + N_0\ket{1}\otimes\left(\frac{\ket{e_1} - \eta_0\ket{e_3}}{N_0}\right)\\
&= \ket{1,e_1},
\end{align*}
and similarly, $V\ket{0,f} = \ket{0,e_2}$ (where $\ket{e}$ and $\ket{f}$ are the states resulting from the action of the restricted attack operator $\mathcal{F}$ defined in Equations \ref{eq:rest-atk-e} and \ref{eq:rest-atk-f}).

Let $U = U_RV$.  We claim that $\mathcal{R} = (q_0,q_1,\eta_0,\eta_1,U)$, as constructed above, is the desired restricted attack.  (Note, we are still assuming for the time being, that $|\eta_i| < 1$.)

We first consider the case where $A$'s sent state is pure; for mixed states the result will follow immediately due to linearity of the operations.  On any particular iteration of the protocol, let $\ket{a} = x\ket{v_0} + y\ket{v_1}$ be the state prepared and sent by $A$.  These $x$ and $y$ are potentially known only to $A$ (i.e., $A$ may choose, each iteration, to send a randomly prepared state in which case the $x$ and $y$ are chosen randomly).  Let $\ket{\psi_R}$ be the state of the qubit as it arrives to $B$ if $E$ uses the restricted attack as constructed above.  Let $\ket{\psi_C}$ be the state if $E$ uses the collective attack $\mathcal{C} = (U_F, U_R)$.  These are both easily computed:
\begin{align}
\ket{\psi_R} = \mathcal{F}\ket{a} &= \ket{0}\otimes \left(xq_0\ket{0}_E + y\sqrt{1-q_1^2}\ket{f}_E\right)\label{eq:state-restricted}\\
&+\ket{1}\otimes\left(x\sqrt{1-q_0^2}\ket{e}_E + yq_1\ket{0}_E\right).\notag
\end{align}
\begin{align}
\ket{\psi_C} = U_F\ket{a,\chi} &= \ket{0}\otimes\left(x\alpha\ket{e_0} + y\sqrt{1-\beta^2}\ket{e_2}\right)\label{eq:state-collective}\\
&+\ket{1}\otimes\left(x\sqrt{1-\alpha^2}\ket{e_1} + y\beta\ket{e_3}\right).\notag
\end{align}

At this point, $B$ will either $\MR$ (saving his measurement result in a private register) or $\R$.  We first consider the case where he chooses $\MR$.  Let $\rho^R$ be the result of this operation in the restricted attack case and $\rho^C$ the result in the general collective attack case.  These density operators are found to be:
\begin{align*}
\rho^R &= \kb{0,0}_{BT}\otimes P\left(xq_0\ket{0}_E + y\sqrt{1-q_1^2}\ket{f}_E\right)\\
&+\kb{1,1}_{BT}\otimes P\left(x\sqrt{1-q_0^2}\ket{e}_E + yq_1\ket{0}_E\right)
\end{align*}
\begin{align*}
\rho^C &= \kb{0,0}_{BT}\otimes P\left(x\alpha\ket{e_0} + y\sqrt{1-\beta^2}\ket{e_2}\right)\\
&+\kb{1,1}_{BT}\otimes P\left(x\sqrt{1-\alpha^2}\ket{e_1} + y\beta\ket{e_3}\right),
\end{align*}
where $P(z) = zz^*$.  At this point, if $E$ is using the restricted attack, she will now apply the operator $U = U_RV$.  However, $V$'s action of $\rho^R$ evolves the state to:
\begin{align*}
V\rho^RV^* &= \kb{0,0}_{BT} \otimes P\left(x\alpha\ket{e_0} + y\sqrt{1-\beta^2}\ket{e_2}\right)\\
&+\kb{1,1}_{BT} \otimes P\left(x\sqrt{1-\alpha^2}\ket{e_1} + y\beta\ket{e_3}\right)\\
&=\rho^C.
\end{align*}
Thus, it is clear that $U\rho^RU^* = U_R\rho^CU_R^*$ and so the resulting quantum state is identical regardless of whether $E$ used the restricted attack $\mathcal{R}$ or the collective attack $\mathcal{C}$ ($A$, $B$, and $E$'s systems are identical regardless).

We now consider the case when $B$ reflects.  Let $\sigma^R$ be the state of the system when the qubit leaves $B$'s lab and $E$ used the restricted attack; let $\sigma^C$ be the state of the system if $E$ uses the collective attack.  That is: $\sigma^R = \ket{\psi_R}\bra{\psi_R}$ and $\sigma^C = \ket{\psi_C}\bra{\psi_C}$, where $\ket{\psi_R}$ and $\ket{\psi_C}$ are defined in Equations \ref{eq:state-restricted} and \ref{eq:state-collective}.  It is trivial to show that $V\sigma^RV^* = \sigma^C$ and so the result holds in the case $B$ reflects.

In the above, we assumed that $|\eta_i| < 1$ for both $i=0,1$.  However, if $|\eta_i| = 1$ for one or both $i$, then $\ket{1,1}_{TE}$ (if $|\eta_0| = 1$) or $\ket{0,1}_{TE}$ (if $|\eta_1|=1$) never appear in the state following $E$'s application of $\mathcal{F}$.  Note also that, if $|\eta_0|=1$ then, it must hold that $\ket{e_3} = e^{i\theta}\ket{e_1}$ for some $\theta$ (and a similar statement may be made if $|\eta_1| = 1$); this phase change will be done by $\mathcal{F}$ in the forward direction and there is no need to ``create'' the $\ket{e_1}$ state later in the reverse, only the $\ket{e_3}$ state (if $|\eta_1| = 1$, then $\ket{e_2} = e^{i\theta'}\ket{e_0}$ and so there is no need to ``create'' the $\ket{e_2}$ state later, only change the phase which is done by $\mathcal{F}$).  Thus, $V$'s action on these states ($\ket{1,1}$ or $\ket{0,1}$) may be arbitrary and we need not define the corresponding $\ket{g_i}$.  It is clear, then, that $V$ may be made a unitary operator, and the rest of the proof follows as above.

We conclude, therefore, that regardless of $A$ or $B$'s choices, the state of the quantum system for all three parties is the same whether $E$ used $\mathcal{C}$ or $\mathcal{R}$ (meaning the resulting density operator describing the joint systems are equal).  Thus the view from $A$, $B$, and $E$'s point of view are identical in both cases; furthermore, the key-rate computation (Equation \ref{eq:keyrate}) will also be identical in both cases.
\end{proof}

Thus, to prove security of \emph{any} SQKD protocol, it is sufficient to consider only a restricted collective attack - security against general collective attacks will follow from that.  In the next section, we will show how this result may be used for a particular protocol to convert it into an equivalent entanglement based version from which a quantum uncertainty bound may be used to compute the key-rate.

Note that the choice of basis $\mathcal{B}$ is irrelevant to the attacker.  Thus, when analyzing the security of a SQKD protocol using this result, one is free to choose a basis that simplifies the analysis and computations.  In the remainder of this paper, our proof assumes $\mathcal{B} = Z = \{\ket{0}, \ket{1}\}$.  \emph{Also note that our proof would hold even for protocols where $B$ performs a CNOT gate (acting on $\mathcal{H}_T$ and his private register) instead of a projective $Z$ basis measurement} when he chooses $\MR$; mathematically, the two operations will be identical in this case, and the proof above follows through identically.

Before concluding this section, we point out a simplification of our definition if we assume $E$'s attack is symmetric.  This is an assumption often made in quantum security proofs and can even be enforced by the parties.  In particular, if the $Z$ basis error induced by $E$'s attack in the forward channel can be parameterized by a single variable $Q$ (i.e., the probability of an $\ket{i}$ flipping to a $\ket{1-i}$ in the forward channel is $Q$),  and if we work with respect to the $Z$ basis (i.e., $\ket{v_i} = \ket{i}$), then the restricted attack adopts a far simpler form:
\begin{define}\label{def:rest-atk-sym}
A \emph{symmetric restricted collective attack} is a tuple $\mathcal{R}_{\text{sym}} = (Q, \eta, U)$, where $Q \in [0,1]$, $\eta \in \mathbb{D}$, and $U$ is a unitary operator acting on $\mathcal{H}_T\otimes\mathcal{H}_E$.  This attack follows the same process as in Definition \ref{def:rest-atk}, setting $\mathcal{R} = (\sqrt{1-Q}, \sqrt{1-Q}, \eta, -\conj{\eta}, U)$.
\end{define}
It is not difficult to see that if $(q_0, q_1, \eta_0, \eta_1, U)$ is a symmetric attack (i.e., $q_0 = \sqrt{1-Q} = q_1$), then it must hold that $\eta_1 = -\conj{\eta_0}$.

\subsection{An Entanglement Based Protocol}

Our conversion from the prepare-and-measure protocol $\protpm$ to an equivalent, entanglement based one, follows several reductions.  Our goal in this section is to construct a new protocol whereby $B$ (who is no longer classical) prepares quantum states and sends them to $A$ (who is still quantum).  However, by analyzing the security of this new protocol, we will show security of the SQKD protocol $\protpm$.

First, note that $B$'s $\MR$ operation may be equivalently modeled as a CNOT gate acting on the qubit and an empty register private to $B$ \cite{SQKD-second}.  Of course, his $\R$ operation may be modeled as the identity operation.  Thus, when analyzing $\protpm$ we may instead analyze the case where $B$ applies a unitary operation acting on the qubit and a register private to him.  Second, we may assume that $A$, instead of preparing and sending a random state of the form $\ket{0}$, $\ket{1}$, $\ket{+}$, or $\ket{-}$, will instead prepare a Bell state of the form $\frac{1}{\sqrt{2}}(\ket{00} + \ket{11})$, sending one particle to $B$ while keeping the other particle in her private lab.  Standard arguments apply to show that security of this new protocol (which we denote by $\protp$) implies security of the prepare and measure one $\protpm$.  Furthermore, Theorem \ref{thm:rest-atk} still applies (see the comment after the proof).  It is clear that, if we prove security of $\protp$ then security of $\protpm$ will follow.

Now, consider the following protocol which we denote by $\protent$, whose quantum communication stage consists of:
\begin{enumerate}
  \item $B$ prepares the state $\sqrt{p_0}\ket{000}_{A_1A_2B} + \sqrt{p_1}\ket{110}_{A_1A_2B}$ if he wishes to ``$\R$'' otherwise he prepares the state $\sqrt{p_0}\ket{000}_{A_1A_2B} + \sqrt{p_1}\ket{111}_{A_1A_2B}$ if he wishes to ``$\MR$'' (he will chose the operation randomly each iteration).  The two qubits $A_1$ and $A_2$ are sent to Alice.  Note that the terminology ``$\MR$'' and ``$\R$'' no longer has any real meaning in this protocol.
  \item $A$ receives both particles $A_1$ and $A_2$ and will measure each in the $Z$ basis or the $X$ basis, choosing randomly.
  \item If $B$ chooses to $\MR$ and if $A$ uses the $Z$ basis, they may use their results as their raw-key bit (we assume $A_1$ is used as $A$'s raw-key bit, though our analysis below would be symmetric if $A_2$ were used instead).  Other iterations, along with a random subset of these ``raw-key'' iterations,  may be used for error estimation in the obvious way.
\end{enumerate}

We give $E$ the ability to control the setting of $p_0$ and $p_1$ which can only increase her power (and, as a consequence, also gives us partial device independence - indeed, one can consider the scenario that $E$ manufactured the device $B$ is using and programmed in a particular $p_0$ and $p_1$ value).  A collective attack against this protocol, thus, is a setting for $p_0$ (the value $p_1 = 1-p_0$ of course) and a unitary operator $U$ acting on two qubits $A_1$ and $A_2$ and $E$'s private quantum memory.

While $\protent$ is not a ``true'' entanglement-based version (as $B$ is making a choice between two pure states), it would not be difficult to make it one simply by increasing the dimension of $B$'s space with an extra qubit (which, after measuring, would determine his choice of $\MR$ or $\R$).  However, as it turns out, the protocol as described will be sufficient to complete our security analysis of the prepare-and-measure protocol.

We claim that, if $\protent$ is secure, then so is $\protp$ (in which case, so is $\protpm$).  In particular, we will show that, for any attack against $\protp$, there exists an attack against $\protent$ which exactly replicates the resulting quantum system.  In particular, given an attack against $\protp$, we will construct an attack against $\protent$ which first ``rewinds'' the forward channel attack simulating the system had $A$ initially sent a qubit as opposed to $B$.  Informally, as an example, if $B$ sends a $\ket{0}$ in the $A_1$ register (which he never does - it is always a pair of qubits but this is simply for illustration), then we construct an attack which causes $A$ to receive a $\ket{0}$ or $\ket{1}$ in her $A_1$ register with the same probabilities as if she had sent a $\ket{0}$ or a $\ket{1}$ and $B$ happened to measure a $\ket{0}$ if they were running $\protp$.  Furthermore, $E$'s memory will be in the same state in both cases.  Thus, we will ``rewind'' the forward channel for $\protent$ to simulate the scenario where $A$ sends a qubit first instead of $B$ sending a qubit first.  The only thing $E$ cannot ``rewind'' is the probability of $B$ observing certain outcomes in $\protp$ - thus the need for her to set the value of $p_0$ during device construction.

\begin{theorem}
Let $(U_F, U_R)$ be a collective attack used against $\protp$ and let $\rho_{ABE}$ be the density operator describing a single iteration of this protocol $\protp$ when $E$ uses this attack.  Then, there exists an attack $\mathcal{E} = (p_0,U)$ against $\protent$ such that, if $\sigma_{ABE}$ is the resulting density operator when running $\protent$ using attack $\mathcal{E}$, it holds that $\rho_{ABE} = \sigma_{ABE}$ assuming the probability that $B$ chooses $\MR$ in $\protp$ is the same as in $\protent$ (and thus the probability of choosing $\R$ is also the same in both protocols).
\end{theorem}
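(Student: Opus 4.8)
The plan is to first reduce to a restricted attack and then to exhibit an explicit ``rewind'' unitary that converts the forward leg of the channel into the state $B$ prepares in $\protent$. First I would apply Theorem~\ref{thm:rest-atk} (which, as remarked, still holds for $\protp$) to replace the given $(U_F,U_R)$ by an equivalent restricted attack with respect to the $Z$ basis; since this leaves $\rho_{ABE}$ unchanged, it suffices to construct the $\protent$ attack from the restricted-attack data. Denote the forward isometry of this restricted attack by $\mathcal{F}$ (acting as in Definition~\ref{def:rest-atk}) and its reverse unitary by $U_R$. Propagating one half of $A$'s Bell pair through $\mathcal{F}$ and then through Bob's operation (a CNOT onto a fresh register $B$ initialized to $\ket{0}_B$ for $\MR$, and the identity for $\R$), I obtain the two ``pre-reverse'' states $\ket{\Psi}_{\MR}$ and $\ket{\Psi}_{\R}$ on registers $A_1$, $T$, $B$, $E$, where $T$ is the transit qubit that becomes $A_2$ after the round trip. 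A direct inspection shows that in the $\MR$ branch Bob's register satisfies $B=T$, while in the $\R$ branch $B=\ket{0}_B$.

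Next I would fix $p_0$ and define the rewind operator. Set $p_0 = (1+q_0^2-q_1^2)/2$ and $p_1=1-p_0$ (note $p_0\in[0,1]$). Define $\rew$ acting on $\mathcal{H}_{A_1}\otimes\mathcal{H}_{A_2}\otimes\mathcal{H}_E$ by specifying its action only on the two states $B$ ever prepares, tensored with Eve's fresh ancilla $\ket{\chi}_E$:
\begin{align*}
\rew\ket{0,0}_{A_1A_2}\ket{\chi}_E &= \tfrac{1}{\sqrt{2p_0}}\left(q_0\ket{0,0}_{A_1A_2}\ket{0}_E + \sqrt{1-q_1^2}\,\ket{1,0}_{A_1A_2}\ket{f}_E\right),\\
\rew\ket{1,1}_{A_1A_2}\ket{\chi}_E &= \tfrac{1}{\sqrt{2p_1}}\left(\sqrt{1-q_0^2}\,\ket{0,1}_{A_1A_2}\ket{e}_E + q_1\ket{1,1}_{A_1A_2}\ket{0}_E\right),
\end{align*}
with $\ket{e},\ket{f}$ as in Definition~\ref{def:rest-atk}. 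I would then verify $\rew$ is an isometry: the factors $p_0,p_1$ are chosen exactly so that each image is normalized, and the two images are automatically orthogonal because they are supported on orthogonal values of the $A_2$ register (so no condition on $\eta_0,\eta_1,\ket{e},\ket{f}$ is needed). Hence $\rew$ extends to a unitary on $\mathcal{H}_{A_1A_2E}$.

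Finally I would set $\mathcal{E}=(p_0,U)$ with $U=U_R\rew$ and check $\sigma_{ABE}=\rho_{ABE}$ branch by branch. By linearity, and with $\rew$ acting as the identity on Bob's register, $\rew$ sends $B$'s $\MR$ state $\sqrt{p_0}\ket{000}+\sqrt{p_1}\ket{111}$ exactly to $\ket{\Psi}_{\MR}$; and---this is the crucial point requiring no further work---the same $\rew$ sends $B$'s $\R$ state $\sqrt{p_0}\ket{000}+\sqrt{p_1}\ket{110}$ to $\ket{\Psi}_{\R}$, since that state is the very same linear combination of the two prepared inputs. Applying $U_R$ afterwards reproduces the reverse-channel attack of $\protp$, so the post-attack joint state of $\protent$ equals that of $\protp$ in each of Bob's two branches. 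As the probability of $\MR$ is assumed equal in both protocols, averaging over Bob's choice yields $\sigma_{ABE}=\rho_{ABE}$.

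The main obstacle, and the reason the construction is delicate, is that Eve must commit to the single unitary $U$ before Bob announces $\MR$ or $\R$, so $U$ must simultaneously simulate both branches. The resolution is to define $\rew$ only on the two states $B$ actually prepares and to match just the $\MR$ branch; linearity then forces the $\R$ branch for free. The one quantity that cannot be ``rewound'' in this way is the weighting of $B$'s two prepared terms, which is precisely why Eve is granted control of $p_0$ (set at device-construction time), and its value is pinned down by the requirement that $\rew$ be norm-preserving.
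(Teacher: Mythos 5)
Your proposal is correct and follows essentially the same route as the paper: reduce to a restricted attack via Theorem~\ref{thm:rest-atk}, define the same rewind isometry $\rew$ with the same normalization and the same choice $p_0 = \frac{1}{2}(1+q_0^2-q_1^2)$, and verify the two branches of $B$'s choice. Your observation that matching the $\MR$ branch forces the $\R$ branch by linearity (since $\rew$ never touches the $B$ register) is a minor streamlining of the paper's explicit branch-by-branch verification, but the construction and argument are the same.
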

\begin{proof}
Let $(U_F, U_R)$ be a collective attack against $\protp$.  Since Theorem \ref{thm:rest-atk} applies, there exists an equivalent restricted attack consisting of the forward operator $\mathcal{F}$ as described in Equation \ref{eq:rest-atk-forward}.  We construct the desired attack $\mathcal{E}$ against $\protent$.

Consider the following operator $\rew$ to be used against $\protent$ in order to ``rewind'' the forward channel attack.  The action of this operator is:
\begin{align}
\rew \ket{00}_{A_1A_2} &= \frac{q_0\ket{000}_{A_1A_2E} + \sqrt{1-q_1^2}\ket{10f}_{A_1A_2E}}{\sqrt{1-q_1^2+q_0^2}}\label{eq:rew-action}\\
\rew \ket{11}_{A_1A_2} &= \frac{\sqrt{1-q_0^2}\ket{01e}_{A_1A_2E} + q_1\ket{110}_{A_1A_2E}}{\sqrt{1-q_0^2+q_1^2}},\notag
\end{align}
where $\ket{e}$ and $\ket{f}$ are the states resulting from the application of $\mathcal{F}$ (see Equations \ref{eq:rest-atk-e} and \ref{eq:rest-atk-f}).  We claim that $\rew$ is an isometry (and thus can be extended to a unitary operation).  This is not difficult to see: let $\ket{v_0} = \rew\ket{00}$ and $\ket{v_1} = \rew\ket{11}$.  It is clear that $\braket{v_0|v_1} = \braket{v_1|v_0} = 0$.  Furthermore:
\[
\braket{v_0|v_0} = \frac{1}{1-q_1^2+q_0^2}(q_0^2+1-q_1^2) = 1,
\]
and similarly $\braket{v_1|v_1} = 1$.  Thus, $\rew$ is an isometry from $\mathcal{H}_{A_1A_2} \rightarrow\mathcal{H}_{A_1A_2E}$.  We abuse notation from here and assume $\rew$ is a unitary operator (its action on other states may be arbitrary).

Now, consider the following attack against $\protent$: $\mathcal{E} = (p_0, (I_{A_1}\otimes U_R)\cdot\rew)$, where $p_0 = \frac{1}{2}(1 - q_1^2+q_0^2)$ (and, so, $p_1 = 1-p_0 = \frac{1}{2}(1-q_0^2+q_1^2)$).  We claim this is the desired attack.  Indeed, if $B$ chooses to $\R$ in $\protp$, then the state of the system after the qubit leaves $B$'s lab, but before $E$ applies $U_R$ is:
\begin{align*}
\ket{\psi^R_{A_1A_2BE}} &= \frac{1}{\sqrt{2}}\left(q_0\ket{0000} + \sqrt{1-q_0^2}\ket{010e}\right)\\
 &+ \frac{1}{\sqrt{2}}\left(\sqrt{1-q_1^2}\ket{100f} + q_1\ket{1100}\right),
\end{align*}
where the order of the systems on the right-hand-side of the above expression are: $A_1A_2BE$ (note $B$ is un-entangled from the above system).  At this point, $E$ has control only of the $A_2E$ subspace.

On the other hand, if $B$ chooses $\R$ in $\protent$, then the state of the system after $E$ applies $\rew$ but before finishing the attack with $U_R$, is:
\begin{align}
&\ket{\phi^R_{A_1A_2BE}} = \label{eq:phiR}\\
& \frac{\sqrt{p_0}}{\sqrt{1-q_1^2+q_0^2}}\left(q_0\ket{0000} + \sqrt{1-q_1^2}\ket{100f}\right)\notag\\
+&\frac{\sqrt{p_1}}{\sqrt{1-q_0^2+q_1^2}}\left(\sqrt{1-q_0^2}\ket{010e} + q_1\ket{1100}\right)\notag\\\notag\\
=& \frac{1}{\sqrt{2}}\left(q_0\ket{0000} + \sqrt{1-q_1^2}\ket{100f}\right)\notag\\
+&\frac{1}{\sqrt{2}}\left(\sqrt{1-q_0^2}\ket{010e} + q_1\ket{1100}\right),\notag
\end{align}
from which it is clear that $\ket{\psi^R_{A_1A_2BE}} \equiv \ket{\phi^R_{A_1A_2BE}}$.  With $\protp$, $E$ will then apply $U_R$ (acting on $A_2E$) to $\ket{\psi^R}$ and then forward the $A_2$ system to Alice; with $\protent$, $E$ will apply the same $U_R$ (also acting on the subspace $A_2E$) to $\ket{\phi^R}$ and forwards both $A_1$ and $A_2$ to Alice.  Regardless, we find the quantum system held by all three parties to be equal.

The case when $B$ chooses to $\MR$ is similar.  Indeed, in this case, consider the state before $E$ applies $U_R$ for $\protp$:
\begin{align*}
\ket{\psi^M_{A_1A_2BE}} &= \frac{1}{\sqrt{2}}\left(q_0\ket{0000} + \sqrt{1-q_0^2}\ket{011e}\right)\\
 &+ \frac{1}{\sqrt{2}}\left(\sqrt{1-q_1^2}\ket{100f} + q_1\ket{1110}\right).
\end{align*}
And, the case for $\protent$ after $E$ applies $\rew$ but before $U_R$ is:
\begin{align}
&\ket{\phi^M_{A_1A_2BE}} = \label{eq:phiM}\\
& \frac{\sqrt{p_0}}{\sqrt{1-q_1^2+q_0^2}}\left(q_0\ket{0000} + \sqrt{1-q_1^2}\ket{100f}\right)\notag\\
+&\frac{\sqrt{p_1}}{\sqrt{1-q_0^2+q_1^2}}\left(\sqrt{1-q_0^2}\ket{011e} + q_1\ket{1110}\right)\notag\\\notag\\
=& \frac{1}{\sqrt{2}}\left(q_0\ket{0000} + \sqrt{1-q_1^2}\ket{100f}\right)\notag\\
+&\frac{1}{\sqrt{2}}\left(\sqrt{1-q_0^2}\ket{011e} + q_1\ket{1110}\right).\notag
\end{align}
Again, we conclude the two systems will be the same after application of $U_R$.  The final density operator will then be a mixture of the two pure states.  Assuming the probability that $B$ chooses to $\MR$ in $\protp$ is equal to the probability he chooses this option in $\protent$ (and thus, the probability of choosing $\R$ is also equal in both protocols) the resulting density operators will be identical.
\end{proof}

Notice that, when attacking $\protent$, $E$ has control of both $A_1$ and $A_2$ (which she does not when attacking $\protp$).  Thus, she has possibly more attack strategies against $\protent$.  However, for every attack against $\protp$, there exists an equivalent attack against $\protent$.  Thus, to prove security of $\protp$ (and thus $\protpm$, our goal), it suffices to analyze $\protent$ as $E$ has potentially more attack capabilities against the latter.  Indeed, we have the following ``chain:''
\[
\protent \Longrightarrow \protent_{res} \Longrightarrow \protp \Longrightarrow \protpm,
\]
where $\protent_{res}$ is the protocol $\protent$ but with $E$ restricted to attacks of the form $\mathcal{E} = (p_0, (I_{A_1}\otimes U)\cdot\rew)$.

\subsection{Final Key-Rate Bound}

Consider the protocol $\protent$ introduced in the previous subsection.  There are two ``modes'' to it: either $B$ chooses to $\MR$ (with probability $P_M$) or he chooses to $\R$ (with probability $P_R)$ - not that these terms have the same meaning as their wording implies.  A single iteration of the protocol, then, may be written as the density operator:
\[
\rho_{A_1A_2BE} = P_M\sigma_{A_1A_2BE} + P_R\tau_{A_1A_2BE},
\]
where:
\begin{align}
\sigma_{A_1A_2BE} = \kb{\phi^M_{A_1A_2BE}}, && \tau_{A_1A_2BE} = \kb{\phi^R_{A_1A_2BE}}\label{eq:state-tau}
\end{align}
and $\ket{\phi^M}$ and $\ket{\phi^R}$ are the (pure) states in the event $B$ chooses $\MR$ or $\R$ respectively.  Now, only those iterations where $B$ chooses $\MR$ are used for key distillation and, so, to compute the key-rate of this protocol, we need to compute $S(A_1^Z|E)_\sigma$ where we use $A_1^Z$ to denote the result of $A$ measuring her $A_1$ register in the $Z$ basis (recall $A$ uses only $A_1$ for her key distillation).  However, we will first analyze $S(A_1|E)_\tau$ and use this to bound the entropy in $\sigma$.


It was shown in \cite{QKD-uncertainty}, that for any density operator acting on a tripartite Hilbert space $\mathcal{H}_A\otimes\mathcal{H}_B\otimes\mathcal{H}_E$, that if $A$ and $B$ make measurements in the $Z$ or $X$ basis, then:
\begin{equation}\label{eq:uncertain}
S(A^Z|E) + S(A^X|B) \ge 1,
\end{equation}
where we use $A^Z$ (respectively $A^X$) to denote the register storing the result of a $Z$ (respectively $X$) basis measurement on the $A$ system.  Using this, we may easily prove the following:
\begin{lemma}
Let $\tau_{A_1A_2BE}$ be the state of the system if $B$ chooses $\R$ in protocol $\protent$ and let $Q_X$ be the error rate in the $X$ basis between registers $A_1$ and $A_2$.  Then:
\begin{equation}
S(A_1^Z|E)_\tau \ge 1 - h(Q_X).
\end{equation}
\end{lemma}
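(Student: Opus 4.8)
The plan is to apply the entropic uncertainty relation of Equation \ref{eq:uncertain} directly, casting Alice's register $A_1$ as the measured system ``$A$,'' her register $A_2$ as the system ``$B$'' appearing in that relation, and Eve's ancilla as ``$E$.'' First I would observe that in the state $\ket{\phi^R}$ of Equation \ref{eq:phiR} Bob's register is always in the product state $\ket{0}_B$ (every term carries a $0$ in the $B$ slot), and since $U_R$ acts only on $A_2E$ this remains true in $\tau_{A_1A_2BE}$. Hence, up to an inert tensor factor on $B$, the state $\tau$ is a genuine tripartite state on $\mathcal{H}_{A_1}\otimes\mathcal{H}_{A_2}\otimes\mathcal{H}_E$, so conditioning on $B$ adds nothing and $S(A_1^Z|E)_\tau = S(A_1^Z|BE)_\tau$. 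Applying Equation \ref{eq:uncertain} then yields
\begin{equation}
S(A_1^Z|E)_\tau + S(A_1^X|A_2)_\tau \ge 1.\notag
\end{equation}

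It remains to show $S(A_1^X|A_2)_\tau \le h(Q_X)$. The key point is that $Q_X$ is by definition the probability that two $X$-basis measurements on $A_1$ and $A_2$ disagree. I would pass from the quantum conditional entropy to its classical counterpart: measuring $A_2$ in the $X$ basis is a quantum channel applied to the \emph{conditioning} register, which can only increase conditional entropy (the data-processing inequality for mutual information, equivalently strong subadditivity). Thus $S(A_1^X|A_2)_\tau \le S(A_1^X|A_2^X)_\tau = H(A_1^X|A_2^X)$, the last quantity now being an ordinary Shannon conditional entropy of two classical bits. Since $A_1^X$ and $A_2^X$ disagree with probability $Q_X$, guessing $A_1^X$ by the value of $A_2^X$ errs with probability $Q_X$, and Fano's inequality over a binary alphabet (where the residual $\log(|\mathcal{X}|-1)$ term vanishes) gives $H(A_1^X|A_2^X) \le h(Q_X)$.

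Chaining the two inequalities produces $S(A_1^Z|E)_\tau \ge 1 - S(A_1^X|A_2)_\tau \ge 1 - h(Q_X)$, as claimed. The step I expect to require the most care is the direction of the data-processing inequality in the second paragraph: one must process the conditioning system $A_2$, not the already-measured register $A_1$, so that the $X$-basis measurement can only \emph{raise} $S(A_1^X|\,\cdot\,)$ and thereby furnish the needed upper bound. Everything else is the uncertainty relation plus Fano; the only genuine conceptual input is recognizing that reflect mode disentangles Bob, leaving $A_1$ and $A_2$ as a plain bipartite system to which Equation \ref{eq:uncertain} applies with $A_2$ in the ``$B$'' slot.
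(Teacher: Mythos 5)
Your proposal is correct and follows essentially the same route as the paper: note that the $B$ register factors out as $\kb{0}_B$, apply the uncertainty relation of Equation \ref{eq:uncertain} with $A_2$ playing the role of the memory system, and then bound $S(A_1^X|A_2)$ by the classical conditional entropy $H(A_1^X|A_2^X)\le h(Q_X)$ via data processing on the conditioning register. Your invocation of Fano's inequality at the last step is in fact slightly more careful than the paper, which writes $H(A_1^X|A_2^X)=h(Q_X)$ where only the inequality $\le h(Q_X)$ is needed (and is what holds in general).
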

\begin{proof}
Note that $B$ is completely independent of the $A_1A_2E$ system; i.e., $\tau_{A_1A_2BE} \equiv \tau_{A_1A_2E} \otimes \kb{0}_B$.  Thus, we may simply consider the state resulting from tracing out $B$ which acts on the tripartite system $\mathcal{H}_{A_1}\otimes\mathcal{H}_{A_2}\otimes\mathcal{H}_E$.  Using the uncertainty relation described above (see Equation \ref{eq:uncertain}), replacing $B$ with $A_2$ (thus, in a way, we are imagining $A$ as two people - one who holds the $A_1$ register and the other who holds the $A_2$ register - of course in real life they are one individual), we have:
\begin{align*}
S(A_1^Z|E) &\ge 1 - S(A_1^X|A_2)\\
&\ge 1-H(A_1^X|A_2^X)\\
&= 1-h(Q_X),
\end{align*}
where the last inequality follows from the fact that measurements can only increase entropy.
\end{proof}

We will use the conditional entropy of $\tau_{A_1E}$ to compute a bound on the entropy in $\sigma_{A_1E}$, thus giving us our desired key-rate computation.  For the following result, we will assume a symmetric attack in that the observed $Z$ basis noise is equal in both forward and reverse channels; this is only to make the algebra more amicable - analyzing an asymmetric channel would follow the same process, just with slightly more, yet trivial, algebra.

Before we continue, however, we require one lemma which, though straight-forward to show, we include for completeness:
\begin{lemma}\label{lemma:trd-bound}
Let $\sigma = \ket{v_0}\bra{v_1} + \ket{v_1}\bra{v_0}$.  Then:
\[
\trd{\sigma} \le 2\sqrt{\braket{v_0|v_0}\braket{v_1|v_1}}.
\]
\end{lemma}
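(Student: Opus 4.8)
The plan is to avoid any direct diagonalization of $\sigma$ and instead exploit two elementary facts: that the trace norm is a genuine norm (hence subadditive), and that the trace norm of a rank-one operator is simply the product of the norms of its two defining vectors. First I would write $\sigma$ as the sum of the two rank-one operators $\ket{v_0}\bra{v_1}$ and $\ket{v_1}\bra{v_0}$ (each the adjoint of the other, consistent with $\sigma$ being Hermitian), and apply the triangle inequality for $\trd{\cdot}$ to obtain
\[
\trd{\sigma} = \trd{\ket{v_0}\bra{v_1} + \ket{v_1}\bra{v_0}} \le \trd{\ket{v_0}\bra{v_1}} + \trd{\ket{v_1}\bra{v_0}}.
\]

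Next I would evaluate each of the two rank-one trace norms. For $A = \ket{v_0}\bra{v_1}$ one computes $A^*A = \braket{v_0|v_0}\,\ket{v_1}\bra{v_1}$, whose single nonzero eigenvalue is $\braket{v_0|v_0}\braket{v_1|v_1}$; thus the lone singular value of $A$ is $\sqrt{\braket{v_0|v_0}\braket{v_1|v_1}}$ and $\trd{A}$ equals this value. By the symmetric computation $\trd{\ket{v_1}\bra{v_0}} = \sqrt{\braket{v_0|v_0}\braket{v_1|v_1}}$ as well. Substituting these into the inequality above gives $\trd{\sigma} \le 2\sqrt{\braket{v_0|v_0}\braket{v_1|v_1}}$, which is precisely the claim.

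There is no substantial obstacle here; the one point requiring care is that the $\ket{v_i}$ are not assumed normalized and that $\sigma$, while Hermitian, need not be positive semi-definite, so one cannot simply read its trace norm off the trace. The triangle-inequality route sidesteps this entirely by reducing everything to rank-one operators. Should a more explicit, self-contained argument be preferred, I would instead work directly with the (at most two) real eigenvalues $\lambda_1,\lambda_2$ of $\sigma$: using $\mathrm{tr}\,\sigma = \lambda_1 + \lambda_2 = 2\,\mathrm{Re}\braket{v_0|v_1}$ together with the determinant relation $\lambda_1\lambda_2 = |\braket{v_0|v_1}|^2 - \braket{v_0|v_0}\braket{v_1|v_1}$, one notes that Cauchy--Schwarz forces $\lambda_1\lambda_2 \le 0$, whence $\trd{\sigma} = |\lambda_1| + |\lambda_2| = |\lambda_1 - \lambda_2| = \sqrt{(\lambda_1+\lambda_2)^2 - 4\lambda_1\lambda_2}$, and a short estimate bounds this by $2\sqrt{\braket{v_0|v_0}\braket{v_1|v_1}}$. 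I would present the triangle-inequality proof as the cleaner of the two.
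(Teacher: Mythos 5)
Your primary argument is correct and takes a genuinely different route from the paper. The paper proceeds by explicit diagonalization: it writes $\ket{v_0}=x\ket{E}$ and $\ket{v_1}=he^{i\theta}\ket{E}+d\ket{I}$ in an orthonormal basis adapted to the two vectors, computes the $2\times 2$ matrix of $\sigma$ and its eigenvalues $\lambda_\pm = a \pm \sqrt{\braket{v_0|v_0}\braket{v_1|v_1}-b^2}$ (with $\braket{v_0|v_1}=a+bi$), and then invokes Cauchy--Schwarz to show $|a|$ is dominated by the square root, so that $\trd{\sigma}=2\sqrt{\braket{v_0|v_0}\braket{v_1|v_1}-b^2}$. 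Your triangle-inequality argument --- $\trd{\sigma}\le\trd{\ket{v_0}\bra{v_1}}+\trd{\ket{v_1}\bra{v_0}}$ together with the fact that a rank-one operator has the single singular value $\sqrt{\braket{v_0|v_0}\braket{v_1|v_1}}$ --- is shorter, avoids any basis construction (and hence the implicit nondegeneracy needed to normalize $\ket{v_0}$), and generalizes immediately to sums of more terms. What it gives up is the exact value: the paper's computation shows the trace norm is actually $2\sqrt{\braket{v_0|v_0}\braket{v_1|v_1}-(\operatorname{Im}\braket{v_0|v_1})^2}$, a strictly stronger statement whenever $\braket{v_0|v_1}$ has nonzero imaginary part, whereas the triangle inequality only delivers the relaxed bound. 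Since the lemma (and its use in Theorem 3) only requires the relaxed bound, nothing is lost in this application. Your fallback trace/determinant argument is essentially the paper's proof repackaged, and your identities $\operatorname{tr}\sigma=2\operatorname{Re}\braket{v_0|v_1}$ and $\det\sigma=|\braket{v_0|v_1}|^2-\braket{v_0|v_0}\braket{v_1|v_1}$ check out.
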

\begin{proof}
Recall that the trace norm is invariant to unitary changes in basis.  We decompose $\ket{v_0}$ and $\ket{v_1}$ as:
\begin{align*}
\ket{v_0} &= x\ket{E}, && \ket{v_1} = he^{i\theta}\ket{E} + d\ket{I},
\end{align*}
where $\braket{E|E} = \braket{I|I} = 1$ and $\braket{E|I} = 0$.  Furthermore, we may assume $x,h,d\in\mathbb{R}$ (any alternative phase of $x$ or $d$ may be absorbed into the corresponding basis vector).  In this $\{\ket{E}, \ket{I}\}$ basis, we have:
\[
\sigma = \left(\begin{array}{cc}
xhe^{i\theta}+xhe^{-i\theta} & xd\\
xd & 0\end{array}\right) = 
\left(\begin{array}{cc}
2xRe\left(he^{i\theta}\right) & xd\\
xd & 0\end{array}\right),
\]
the eigenvalues of which are:
\begin{align*}
\lambda_\pm &= xRe\left(he^{i\theta}\right) \pm \sqrt{x^2Re^2\left(he^{i\theta}\right)+x^2d^2}.
\end{align*}
Writing $\braket{v_0|v_1} = a + bi$, with $a,b\in \mathbb{R}$, we have the following identities (which follows from the decomposition of $\ket{v_0}$ and $\ket{v_1}$):
\begin{align}
&x^2 = \braket{v_0|v_0}\\
&he^{i\theta} = \frac{\braket{v_0|v_1}}{x} = \frac{a}{x} + \frac{b}{x}i\Rightarrow h^2 = \frac{a^2}{x^2} + \frac{b^2}{x^2}\\
&h^2 + d^2 = \braket{v_1|v_1}\Rightarrow d^2 = \braket{v_1|v_1} - \frac{a^2}{x^2} - \frac{b^2}{x^2}.
\end{align}
Substituting these into $\lambda_\pm$ yields:
\begin{align*}
\lambda_\pm &= xRe\left(he^{i\theta}\right) \pm \sqrt{x^2Re^2\left(he^{i\theta}\right)+x^2d^2}\\
&= a \pm \sqrt{a^2 + x^2\braket{v_1|v_1} - a^2 - b^2}\\
&= a \pm \sqrt{\braket{v_0|v_0}\braket{v_1|v_1} - b^2}.
\end{align*}
The Cauchy-Schwarz inequality forces $b^2 \le \braket{v_0|v_0}\braket{v_1|v_1}$, so the square-root is real.  In fact, by the Cauchy-Schwarz inequality:
\begin{equation}\label{eq:lemma-bound1}
a^2+b^2 \le \braket{v_0|v_0}\braket{v_1|v_1} \Rightarrow |a| \le \sqrt{\braket{v_0|v_0}\braket{v_1|v_1}-b^2}.
\end{equation}
Now: $\trd{\sigma} = |\lambda_+|+|\lambda_-|$.  If $a \ge 0$, then, using Equation \ref{eq:lemma-bound1} and letting $\delta = \sqrt{\braket{v_0|v_0}\braket{v_1|v_1}-b^2}$ (thus $0\le a \le \delta$):
\[
\trd{\sigma} = |a+\delta| + |a-\delta| = (a+\delta) + (\delta-a) = 2\delta.
\]
Alternatively, if $a < 0$:
\[
\trd{\sigma} = |a+\delta| + |a-\delta| = (a+\delta) + (\delta-a) = 2\delta.
\]
Of course $\delta \le \sqrt{\braket{v_0|v_0}\braket{v_1|v_1}}$ completing the proof.
\end{proof}

Finally, we prove the following theorem which bounds the von Neumann entropy in $\sigma$ allowing us to compute the key-rate of this SQKD protocol.

\begin{theorem}\label{thm:main-ent}
Given $\sigma_{A_1A_2BE}$ and $\tau_{A_1A_2BE}$ as defined above in Equation \ref{eq:state-tau}, let $Q$ be the error rate in the $Z$ basis observed on a single channel (we assume both channels have the same $Z$ basis error rate). Also, let $\delta$ be defined as:
\[
\delta = 2Q(1-Q) + \left(\frac{1}{2} + 2Q[1-Q]\right)\cdot h\left(\frac{4Q(1-Q)}{1+4Q(1-Q)}\right).
\]
Then, assuming $E$'s attack is symmetric and of the form $\mathcal{E} = (1/2, (I_{A_1}\otimes V)\rew)$, where $V$ acts on $\mathcal{H}_{A_2E}$ and $\rew$ is a ``rewind'' operator as discussed earlier, it holds that $S(A_1^Z|E)_\sigma \ge f(Q)$, where:
\begin{equation}
f(Q) = \left\{\begin{array}{ll}
S(A_1^Z|E)_\tau - \delta & \text{ if } S(A_1^Z|E)_\tau \ge 2\delta\\
\frac{1}{2}S(A_1^Z|E)_\tau & \text{ otherwise}
\end{array}\right.
\end{equation}
\end{theorem}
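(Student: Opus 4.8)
The plan is to realize the measure-case state $\sigma$ as a balanced mixture of the reflect-case state $\tau$ and a unitarily rotated copy of it, and then bound the target entropy by concavity. The only structural difference between $\ket{\phi^M}$ (Eq.~\ref{eq:phiM}) and $\ket{\phi^R}$ (Eq.~\ref{eq:phiR}) is that in the measure case $B$'s register holds a perfect copy of the pre-$V$ $Z$-value of $A_2$; tracing it out therefore dephases the two $A_2$-branches that remain coherent in $\tau$. Writing this dephasing on $\mathcal{H}_{A_2E}$ as $\rho\mapsto\frac12(\rho+W\rho\conj{W})$ with the Hermitian unitary $W=V(\ketbra{0}_{A_2}-\ketbra{1}_{A_2})\conj{V}$, I would establish the operator identity $\sigma_{A_1A_2E}=\frac12(\tau_{A_1A_2E}+W\tau_{A_1A_2E}\conj{W})$ (this is just the two-projector form of the dephasing channel in the $V$-rotated basis). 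Since the $Z$-measurement of $A_1$ and the partial trace over $A_2$ are linear maps, the same convex identity survives on the reduced states, giving $\sigma^Z_{A_1E}=\frac12(\tau^Z_{A_1E}+\widetilde\tau^Z_{A_1E})$ with $\widetilde\tau=W\tau\conj{W}$.

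Concavity of the conditional von Neumann entropy then yields $S(A_1^Z|E)_\sigma\ge\frac12 S(A_1^Z|E)_\tau+\frac12 S(A_1^Z|E)_{\widetilde\tau}$, and the two branches of $f(Q)$ arise from two lower bounds on the last term. Because $A_1^Z$ is the outcome of a measurement, $\widetilde\tau^Z_{A_1E}$ is a classical-quantum state, so $S(A_1^Z|E)_{\widetilde\tau}\ge 0$; this alone gives the unconditional bound $S(A_1^Z|E)_\sigma\ge\frac12 S(A_1^Z|E)_\tau$, the ``otherwise'' case. For the sharper branch I would instead lower-bound $S(A_1^Z|E)_{\widetilde\tau}$ by $S(A_1^Z|E)_\tau-g(T)$ using the Alicki-Fannes-Winter continuity bound for conditional entropy in its tighter classical-quantum form $g(T)=T\log|A_1|+(1+T)h\!\left(\tfrac{T}{1+T}\right)$ with $\log|A_1|=1$, where $T=\frac12\trd{\tau^Z_{A_1E}-\widetilde\tau^Z_{A_1E}}$. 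This gives $S(A_1^Z|E)_\sigma\ge S(A_1^Z|E)_\tau-\tfrac12 g(T)$, and taking the larger of the two bounds reproduces the stated $f(Q)$, the crossover at $S(A_1^Z|E)_\tau=2\delta$ being exactly where $S_\tau-\delta$ overtakes $\frac12 S_\tau$.

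The technical heart, and what I expect to be the main obstacle, is evaluating $T$ and showing $T=4Q(1-Q)$, since $\delta=\tfrac12 g(T)$ is determined entirely by this number. From the decomposition one has $\tau^Z_{A_1E}-\widetilde\tau^Z_{A_1E}=2(\tau^Z_{A_1E}-\sigma^Z_{A_1E})$, a Hermitian operator block-diagonal in $A_1$; each block is the symmetrized $A_2$-coherence surviving in $\tau$ but destroyed in $\sigma$, carrying an amplitude prefactor $q_0\sqrt{1-q_0^2}=\sqrt{Q(1-Q)}$ (using $q_0=q_1=\sqrt{1-Q}$ for the symmetric attack) read off from Eqs.~\ref{eq:phiM}--\ref{eq:phiR}. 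A direct appeal to Lemma~\ref{lemma:trd-bound} gives only $\sqrt{Q(1-Q)}$ per block, which is too loose; the improvement comes from the hypothesis that the reverse channel also has $Z$-error $Q$, which forces $V$ to flip $A_2$ with probability $Q$, so that each branch $V\ket{i,\cdot}_{A_2E}$ decomposes as $\sqrt{1-Q}\ket{i}_{A_2}\ket{\cdot}_E+\sqrt{Q}\ket{1-i}_{A_2}\ket{\cdot}_E$. Tracing out $A_2$ then extracts a second factor $\sqrt{Q(1-Q)}$ together with a sum of two normalized $E$-coherences, so Lemma~\ref{lemma:trd-bound} applied termwise caps each $A_1$-block at $2Q(1-Q)$; summing the two blocks (the factor $2$ above cancelling the $\tfrac12$ in the trace distance) gives $T=4Q(1-Q)$, hence the stated $\delta$. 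Finally, substituting the reflect-case estimate $S(A_1^Z|E)_\tau\ge 1-h(Q_X)$ from the preceding lemma into $f(Q)$ produces the explicit key-rate for $\protpm$.
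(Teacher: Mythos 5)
Your proposal is correct and follows essentially the same route as the paper: the decomposition $\sigma_{A_1^ZE}=\frac{1}{2}(\tau_{A_1^ZE}+\mu_{A_1^ZE})$ (your $\widetilde\tau=W\tau W^*$ is exactly the paper's second component built from $U\ket{\phi_-}$, since $Z_{A_2}\rew\ket{\phi_+}=\rew\ket{\phi_-}$), concavity of conditional entropy, the Alicki--Fannes--Winter continuity bound combined with $S(A_1^Z|E)_{\widetilde\tau}\ge 0$ for the two branches of $f(Q)$, and a termwise application of Lemma \ref{lemma:trd-bound} with $\braket{e_i|e_i},\braket{f_i|f_i}\in\{Q,1-Q\}$ giving $\frac{1}{2}\trd{\tau_{A_1^ZE}-\widetilde\tau_{A_1^ZE}}\le 4Q(1-Q)$. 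The only difference is presentational: the paper obtains the mixture identity by expanding $\frac{1}{2}(\kb{00}+\kb{11})$ in the Bell basis rather than via your dephasing unitary $W=V(\ketbra{0}-\ketbra{1})V^*$.
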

\begin{proof}
Let $U$ be an arbitrary attack operator used against $\protent$ (this is an isometry from $\mathcal{H}_{A_1A_2}$ to $\mathcal{H}_{A_1A_2E}$).  Also, let $\ket{\phi_\pm} = \frac{1}{\sqrt{2}}(\ket{00} \pm \ket{11})$ be two Bell states.  Without loss of generality, we may write $U$'s action as:
\begin{align}
U\ket{\phi_+} &= \sum_{a,b\in\{0,1\}}\ket{a,b}_{A_1A_2}\otimes\ket{e_{ab}}_E\label{eq:general-U-action}\\
U\ket{\phi_-} &= \sum_{a,b\in\{0,1\}}\ket{a,b}_{A_1A_2}\otimes\ket{f_{ab}}_E,\notag
\end{align}
where the states $\ket{e_{ab}}$ and $\ket{f_{ab}}$ are arbitrary, not necessarily normalized nor orthogonal, states in $\mathcal{H}_E$.  The density operator $\tau_{A_1^ZA_2^ZE}$ (which is the state of the system when $B$ chooses $\R$ in protocol $\protent$ and $A$ measures both her qubits in the $Z$ basis - an operation denoted by $\mathcal{M}$ below) when faced with this attack is found to be:
\begin{align*}
&\tau_{A_1^ZA_2^ZE} = \mathcal{M}\left[U\kb{\phi_+}U^*\right] = \sum_{a,b}\kb{a,b}\otimes\kb{e_{ab}}\\
\Rightarrow&\tau_{A_1^ZE} = \sum_{a\in\{0,1\}}\kb{a}\otimes(\kb{e_{a0}}+\kb{e_{a1}}).
\end{align*}
Likewise, we may compute $\sigma$, the density operator for those iterations where $B$ chooses $\MR$ (note that, below, we define $P(z) = zz^*$):
\begin{align*}
&\sigma_{A_1^ZA_2^ZE} = \mathcal{M}\left[U\left(\frac{1}{2}\kb{00} + \frac{1}{2}\kb{11})\right)U^*\right] \\
&=\mathcal{M}\left[\frac{1}{4}P(U\ket{\phi_+}+U\ket{\phi_-}) + \frac{1}{4}P(U\ket{\phi_+} - U\ket{\phi_-})\right]\\
&=\frac{1}{4}\sum_{a,b}\kb{a,b}\otimes\left(P[\ket{e_{a,b}} + \ket{f_{a,b}}] + P[\ket{e_{a,b}}-\ket{f_{a,b}}]\right)\\
&=\frac{1}{2}\sum_{a,b}\kb{a,b}\otimes(\kb{e_{a,b}} + \kb{f_{a,b}}).
\end{align*}
From the above, it is not difficult to see:
\begin{equation}
\sigma_{A_1^ZE} = \frac{1}{2}\tau_{A_1^ZE} + \frac{1}{2}\mu_{A_1^ZE},
\end{equation}
where:
\begin{equation}
\mu_{A_1^ZE} = \sum_{a}\kb{a}\otimes(\kb{f_{a0}}+\kb{f_{a1}}).
\end{equation}
Thus, by concavity of conditional entropy, we have:
\begin{equation}\label{eq:ent-sigma-bound}
S(A_1^Z|E)_\sigma \ge \frac{1}{2}S(A_1^Z|E)_\tau + \frac{1}{2}S(A_1^Z|E)_\mu
\end{equation}
Recall that we are actually interested in proving security against $\protp$ and so only need to concern ourselves with attacks of the form: $U = (I_{A_1}\otimes V)\rew$.  From Equation \ref{eq:rew-action}, and noting that, since we are assuming a symmetric attack in that $p_0 = p_1 = 1/2$ and so $q_0^2=q_1^2$, the action of $\rew$ on $\ket{\phi_\pm}$ is:
\[
\rew\ket{\phi_\pm} = \frac{1}{\sqrt{2}}(q_0\ket{000} + \bar{q_1}\ket{10f} \pm \bar{q_0}\ket{01e} \pm q_1\ket{110}),
\]
where we use $\bar{q_0}$ to mean $\sqrt{1-q_0^2}$ (and similarly for $\bar{q_1}$) and where $\ket{e}$ and $\ket{f}$ are defined in Equations \ref{eq:rest-atk-e} and \ref{eq:rest-atk-f}.

Now, we may, without loss of generality, describe $V$'s action as follows (recall $V$ is a unitary operator acting on $\mathcal{H}_{A_2E}$):
\begin{align*}
V\ket{00} = \ket{0,e_0}+\ket{1,e_1}, && V\ket{10} = \ket{0,e_2} + \ket{1,e_3}\\
V\ket{0f} = \ket{0,f_0}+\ket{1,f_1}, && V\ket{1e} = \ket{0,f_2} + \ket{1,f_3}
\end{align*}
Thus, after applying attack $U = (I_{A_1}\otimes V)\cdot\rew$ to $\ket{\phi_\pm}$, we find:
\begin{align*}
\sqrt{2}U\ket{\phi_{\pm}} &= q_0\ket{00e_0} + q_0\ket{01e_1} + \bar{q_1}\ket{10f_0} + \bar{q_1}\ket{11f_1}\\
&\pm\bar{q_0}\ket{00f_2} \pm \bar{q_0}\ket{01f_3} \pm q_1\ket{10e_2} \pm q_1\ket{11e_3}\\
\end{align*}
Translating to notation used in Equation \ref{eq:general-U-action}, we find:
\begin{align*}
\ket{e_{00}} &= \frac{1}{\sqrt{2}}\left(q_0\ket{e_0} + \bar{q_0}\ket{f_2}\right)\\
\ket{e_{01}} &= \frac{1}{\sqrt{2}}\left(q_0\ket{e_1} + \bar{q_0}\ket{f_3}\right)\\
\ket{e_{10}} &= \frac{1}{\sqrt{2}}\left(\bar{q_1}\ket{f_0} + q_1\ket{e_2}\right)\\
\ket{e_{11}} &= \frac{1}{\sqrt{2}}\left(\bar{q_1}\ket{f_1} + q_1\ket{e_3}\right),
\end{align*}
while the states $\ket{f_{ab}}$ are found simply by taking $\ket{e_{ab}}$ and changing the ``$+$'' to a ``$-$'' in between the two kets.

Our goal is to determine a bound on $S(A_1^Z|E)_\mu$ using $S(A_1^Z|E)_\tau$.  To do so, we will use a continuity bound on conditional entropy determined by Winter in \cite{QC-winter2016tight} (a tighter version of the Alicki-Fannes inequality \cite{QC-alicki2004continuity}).  Using this bound, we have:
\begin{equation}\label{eq:ent-bound}
|S(A_1^Z|E)_\tau - S(A_1^Z|E)_\mu| \le \epsilon + (1+\epsilon)h\left(\frac{\epsilon}{1+\epsilon}\right),
\end{equation}
where:
\[
\frac{1}{2}\trd{\tau_{A_1^ZE} - \mu_{A_1^ZE}} \le \epsilon \le 1.
\]
Thus, if we bound the trace distance between $\tau$ and $\mu$ we can determine an upper-bound on $S(A_1^Z|E)_\mu$ thus giving us our result.

By the triangle inequality, we have:
\begin{align*}
\trd{\tau_{A_1^ZE} - \mu_{A_1^ZE}} &\le \sum_{a,b}\trd{\kb{e_{ab}} - \kb{f_{ab}}}\\
&=q_0\bar{q_0}\trd{\ket{e_0}\bra{f_2} + \ket{f_2}\bra{e_0}}\\
&+q_0\bar{q_0}\trd{\ket{e_1}\bra{f_3} + \ket{f_3}\bra{e_1}}\\
&+\bar{q_1}q_1\trd{\ket{f_0}\bra{e_2} + \ket{e_2}\bra{f_0}}\\
&+\bar{q_1}q_1\trd{\ket{f_1}\bra{e_3} + \ket{e_3}\bra{f_1}}\\\\
&\le 2q_0\bar{q_0}\sqrt{\braket{e_0|e_0}\braket{f_2|f_2}}\\
&+ 2q_0\bar{q_0}\sqrt{\braket{e_1|e_1}\braket{f_3|f_3}}\\
&+ 2\bar{q_1}q_1\sqrt{\braket{e_2|e_2}\braket{f_0|f_0}}\\
&+ 2\bar{q_1}q_1\sqrt{\braket{e_3|e_3}\braket{f_1|f_1}},
\end{align*}
where the last inequality follows from Lemma \ref{lemma:trd-bound}.

Assuming the attack is symmetric, with $Q$ being the $Z$ basis noise in each channel (i.e., the probability of a $\ket{i}$ flipping to a $\ket{1-i}$ in either channel is $Q$), then it is easy to see that $q_0 = q_1 = \sqrt{1-Q}$ (thus $\bar{q_1} = \bar{q_0} = \sqrt{Q}$) and also:
\begin{align*}
&\braket{e_0|e_0} = \braket{e_3|e_3} = \braket{f_0|f_0} = \braket{f_3|f_3} = (1-Q)\\
&\braket{e_1|e_1} = \braket{e_2|e_2} = \braket{f_1|f_1} = \braket{f_2|f_2} = Q
\end{align*}
(Note that these values are observable by $A$ and $B$ in both $\protent$ and $\protpm$.)

Thus:
\begin{align}
\frac{1}{2}\trd{\tau_{A_1^ZE} - \mu_{A_1^ZE}} &\le \frac{1}{2}\left(8\sqrt{Q(1-Q)}\sqrt{Q(1-Q)}\right)\\
&=4Q(1-Q).
\end{align}
From Equation \ref{eq:ent-bound}, and setting $\epsilon = 4Q(1-Q)$, we have:
\[
S(A_1^Z|E)_\mu \ge S(A_1^Z|E)_\tau - \epsilon - (1+\epsilon)h\left(\frac{\epsilon}{1+\epsilon}\right).
\]
Combining this with Equation \ref{eq:ent-sigma-bound}, and also noting that, since $\mu_{A_1^ZE}$ is a classical-quantum state, and so $S(A_1^Z|E)_\mu \ge 0$, completes the proof.
\end{proof}

Note that the only place we used a symmetric assumption was in the computation of $q_i$, $\braket{e_i|e_i}$, and $\braket{f_i|f_i}$ and in the assumption that $B$ sends Bell states $\ket{\phi_{\pm}}$; if the attack was asymmetric, the only thing that would change would be these quantities - however our technique used in the proof of Theorem \ref{thm:main-ent} would still apply, it would just require  more algebra.  Note, however, that since these are observable quantities, $A$ and $B$ can even enforce that $E$ use a symmetric attack.

The key-rate of the protocol $\protent_{res}$, therefore, is:
\begin{align}
r &= S(A_1^Z|E)_\sigma - H(A_1^Z|B^Z)_\sigma\notag\\
&\ge g(Q,Q_X) - h(Q),
\end{align}
where:
\[
g(Q,Q_X) = \left\{\begin{array}{ll}
1-h(Q_X) - \delta & \text{ if } 1-h(Q_X) \ge 2\delta\\
\frac{1}{2}(1-h(Q_X)) & \text{ otherwise}
\end{array}\right.
\]
and $\delta$ is defined in Theorem \ref{thm:main-ent}.  Note that, above, we used the trivial fact that $H(A_1^Z|B^Z)_\sigma = h(Q)$ in this scenario (an asymmetric attack will be different, but also easily computed).  Of course, by our analysis conducted above, the key-rate of $\protpm$ can only be higher than this.

We now evaluate our new key-rate bound, in particular its noise tolerance, comparing to current known results.  In \cite{SQKD-Krawec-SecurityProof}, a lower bound on the key-rate of $\protpm$ was derived without mismatched measurements.  In \cite{QKD-Tom-Krawec-Arbitrary}, we derived a new bound, but using numerous mismatched measurement statistics (e.g., the probability of a $\ket{+}$ being measured as a $\ket{0}$).  Such statistics give greater information on the attack being used (thus increasing the bound on $S(A|E)$) but at the cost of wasting quantum communication on error estimation.

As shown in Table \ref{table:results}, the noise tolerance of our new bound here is higher than previous best work utilizing only noise statistics (and not mismatched measurements) \cite{SQKD-Krawec-SecurityProof}.  However, it is lower than the bound derived using mismatched measurements as we did in \cite{QKD-Tom-Krawec-Arbitrary}.  \emph{Note that we did not require any mismatched measurements for our derivation in this paper,} thus it is expected that our key-rate bound may not be as high as that discovered in \cite{QKD-Tom-Krawec-Arbitrary}.  In that work, we used over 12 different measurement statistics to bound $S(A|E)$.  In this security proof, we are only using \emph{three} measurement statistics: the error rate in the $Z$ basis (for both channels) and the $X$ error rate observed by $A$.  Thus, it is not surprising our noise tolerance here is lower. An interesting open question is whether this difference is primarily an artifact of our proof (in particular, Theorem \ref{thm:main-ent} may not be tight) or whether this shows the necessity of using mismatched statistics for this protocol (such statistics are known to be highly beneficial to limited-resource protocols \cite{QKD-Tom-threestate1} and even required for some SQKD protocols \cite{SQKD-krawec2017limited}).  We intend to investigate this further in the near future.

\begin{table}
\begin{tabular}{l|c|c|c}
&Old Proof \cite{SQKD-Krawec-SecurityProof} & New Proof & w/MM \cite{QKD-Tom-Krawec-Arbitrary}\\
\hline
$Q_X = Q$ & $5.34\%$ & $\mathbf{6.14\%}$ & $11\%$\\
$Q_X = 2Q(1-Q)$ & $4.57\%$ & $\mathbf{4.82\%}$ & $7.9\%$\\
$Q_X = \frac{1}{2}Q$ & $5.92\%$ & $\mathbf{7.5\%}$ & $15.12\%$
\end{tabular}
\caption{Showing the maximal noise tolerance (i.e., the maximum $Q$ for which the key-rate of $\protpm$ remains positive) of $\protpm$ derived here (middle column labeled ``New Proof'').  Also comparing with current known results for this protocol.  ``Old Proof'' is from \cite{SQKD-Krawec-SecurityProof} and bounded the key-rate using only three noise statistics (as was done here).  ``w/MM'' was from \cite{QKD-Tom-Krawec-Arbitrary} and utilized mismatched measurements (in total over 12 different measurement statistics as opposed to only three used here).  Our new result is higher than that in \cite{SQKD-Krawec-SecurityProof}, but not as high as when utilizing mismatched measurement statistics. See text for further explanation.}\label{table:results}
\end{table}

\subsection{Comment on General Attacks}

Normally, one may extend the computations done for collective attacks to produce security against general attacks, in the asymptotic scenario, by invoking a de Finetti type argument \cite{QKD-renner-keyrate,QKD-general-attack,QKD-general-attack2}.  It seems like this should also hold for our security proof here, however, due to our reliance on the restricted collective attack, this would require a more rigorous proof\footnote{Thanks to Rotem Liss for our conversations on this.}.  We suspect the results in this paper would carry through to imply security against general attacks, however we leave a formal, rigorous, proof of this as future work.

\section{Closing Remarks}

In this paper, we presented an entirely new proof of security for a semi-quantum protocol.  By first reducing the problem to an equivalent entanglement-based protocol and then using a quantum entropic uncertainty relation, along with a continuity bound on conditional von Neumann entropy, we were able to derive a cleaner key-rate expression for this protocol than previous work.  Furthermore, our new key-rate has a higher noise tolerance than previous work \emph{without mismatched measurements}.  Of course, our new key-rate has a lower tolerance than when using mismatched measurements.  An open question worth investigating is whether this is an artifact of our proof technique or if mismatched measurements are absolutely required for this protocol to achieve optimal noise tolerance.

This new technique we developed in this paper, along with the various security results produced along the way, such as the restricted attack, may hold great application in studying other protocols relying on a two-way channel.  It may also hold the key to studying SQKD protocols relying on higher-dimensional quantum channels such as the one proposed in \cite{QKD-vlachou2017quantum} based on quantum walks.  We suspect this technique can be suitably adapted to handle other two-way protocols and other semi-quantum protocols in higher dimensions.


\begin{thebibliography}{10}

\bibitem{QKD-survey}
Valerio Scarani, Helle Bechmann-Pasquinucci, Nicolas~J. Cerf, Miloslav
  Du\ifmmode~\check{s}\else \v{s}\fi{}ek, Norbert L\"utkenhaus, and Momtchil
  Peev.
\newblock The security of practical quantum key distribution.
\newblock {\em Rev. Mod. Phys.}, 81:1301--1350, Sep 2009.

\bibitem{SQKD-first-PRL}
Michel Boyer, Dan Kenigsberg, and Tal Mor.
\newblock Quantum key distribution with classical bob.
\newblock {\em Phys. Rev. Lett.}, 99:140501, Oct 2007.

\bibitem{SQKD-second}
Michel Boyer, Ran Gelles, Dan Kenigsberg, and Tal Mor.
\newblock Semiquantum key distribution.
\newblock {\em Phys. Rev. A}, 79:032341, Mar 2009.

\bibitem{SQKD-lessthan4}
Xiangfu Zou, Daowen Qiu, Lvzhou Li, Lihua Wu, and Lvjun Li.
\newblock Semiquantum-key distribution using less than four quantum states.
\newblock {\em Phys. Rev. A}, 79:052312, May 2009.

\bibitem{SQKD-Single-Security}
Walter~O Krawec.
\newblock Restricted attacks on semi-quantum key distribution protocols.
\newblock {\em Quantum Information Processing}, 13(11):2417--2436, 2014.

\bibitem{SQKD-MultiUser}
Walter~O Krawec.
\newblock Mediated semiquantum key distribution.
\newblock {\em Physical Review A}, 91(3):032323, 2015.

\bibitem{SQKD-3}
Wang Jian, Zhang Sheng, Zhang Quan, and Tang Chao-Jing.
\newblock Semiquantum key distribution using entangled states.
\newblock {\em Chinese Physics Letters}, 28(10):100301, 2011.

\bibitem{SQKD-cl-A}
Hua Lu and Qing-Yu Cai.
\newblock Quantum key distribution with classical alice.
\newblock {\em International Journal of Quantum Information}, 6(06):1195--1202,
  2008.

\bibitem{SQKD-4}
Kun-Fei Yu, Chun-Wei Yang, Ci-Hong Liao, and Tzonelih Hwang.
\newblock Authenticated semi-quantum key distribution protocol using bell
  states.
\newblock {\em Quantum Information Processing}, 13(6):1457--1465, 2014.

\bibitem{SQKD-no-measure1}
Xiangfu Zou, Daowen Qiu, Shengyu Zhang, and Paulo Mateus.
\newblock Semiquantum key distribution without invoking the classical partyÕs
  measurement capability.
\newblock {\em Quantum Information Processing}, pages 1--16, 2015.

\bibitem{SQKD-no-measure2}
Qin Li, Wai~Hong Chan, and Shengyu Zhang.
\newblock Real semiquantum key distribution with secure delegated quantum
  computation.
\newblock {\em arXiv preprint arXiv:1508.07090}, 2015.

\bibitem{SQKD-mirror}
Michel Boyer, Matty Katz, Rotem Liss, and Tal Mor.
\newblock Experimentally feasible protocol for semiquantum key distribution.
\newblock {\em Physical Review A}, 96(6):062335, 2017.

\bibitem{SQKD-information}
Takayuki Miyadera.
\newblock Relation between information and disturbance in quantum key
  distribution protocol with classical alice.
\newblock {\em Int. J. of Quantum Information}, 9, 2011.

\bibitem{SQKD-Krawec-SecurityProof}
Walter~O Krawec.
\newblock Security proof of a semi-quantum key distribution protocol.
\newblock In {\em Information Theory (ISIT), 2015 IEEE International Symposium
  on}, pages 686--690. IEEE, 2015.

\bibitem{SQKD-Krawec-ReflectionSecurity}
Walter~O Krawec.
\newblock Security of a semi-quantum protocol where reflections contribute to
  the secret key.
\newblock {\em Quantum Information Processing}, 15(5):2067--2090, 2016.

\bibitem{SQKD-zhang2016single}
Wei Zhang, Daowen Qiu, Xiangfu Zou, and Paulo Mateus.
\newblock A single-state semi-quantum key distribution protocol and its
  security proof.
\newblock {\em arXiv preprint arXiv:1612.03087}, 2016.

\bibitem{QKD-Tom-Krawec-Arbitrary}
Walter~O. Krawec.
\newblock Quantum key distribution with mismatched measurements over arbitrary
  channels.
\newblock {\em Quantum Information and Computation}, 17(3 and 4):209--241,
  2017.

\bibitem{QKD-Tom-First}
Stephen~M Barnett, Bruno Huttner, and Simon~JD Phoenix.
\newblock Eavesdropping strategies and rejected-data protocols in quantum
  cryptography.
\newblock {\em Journal of Modern Optics}, 40(12):2501--2513, 1993.

\bibitem{QKD-Tom-KeyRateIncrease}
Shun Watanabe, Ryutaroh Matsumoto, and Tomohiko Uyematsu.
\newblock Tomography increases key rates of quantum-key-distribution protocols.
\newblock {\em Physical Review A}, 78(4):042316, 2008.

\bibitem{QKD-vlachou2017quantum}
Chrysoula Vlachou, Walter Krawec, Paulo Mateus, Nikola Paunkovic, and Andre
  Souto.
\newblock Quantum key distribution with quantum walks.
\newblock {\em arXiv preprint arXiv:1710.07979}, 2017.

\bibitem{QKD-twoway2}
Normand~J Beaudry, Marco Lucamarini, Stefano Mancini, and Renato Renner.
\newblock Security of two-way quantum key distribution.
\newblock {\em Physical Review A}, 88(6):062302, 2013.

\bibitem{QC-Intro}
M.A. Nielsen and I.L. Chuang.
\newblock {\em Quantum Computation and Quantum Information}.
\newblock Cambridge University Press, Cambridge, MA, 2000.

\bibitem{QKD-winter-keyrate}
Igor Devetak and Andreas Winter.
\newblock Distillation of secret key and entanglement from quantum states.
\newblock {\em Proceedings of the Royal Society A: Mathematical, Physical and
  Engineering Science}, 461(2053):207--235, 2005.

\bibitem{QKD-uncertainty}
Mario Berta, Matthias Christandl, Roger Colbeck, Joseph~M Renes, and Renato
  Renner.
\newblock The uncertainty principle in the presence of quantum memory.
\newblock {\em Nature Physics}, 6(9):659--662, 2010.

\bibitem{QC-winter2016tight}
Andreas Winter.
\newblock Tight uniform continuity bounds for quantum entropies: conditional
  entropy, relative entropy distance and energy constraints.
\newblock {\em Communications in Mathematical Physics}, 347(1):291--313, 2016.

\bibitem{QC-alicki2004continuity}
Robert Alicki and Mark Fannes.
\newblock Continuity of quantum conditional information.
\newblock {\em Journal of Physics A: Mathematical and General}, 37(5):L55,
  2004.

\bibitem{QKD-Tom-threestate1}
Kiyoshi Tamaki, Marcos Curty, Go~Kato, Hoi-Kwong Lo, and Koji Azuma.
\newblock Loss-tolerant quantum cryptography with imperfect sources.
\newblock {\em Physical Review A}, 90(5):052314, 2014.

\bibitem{SQKD-krawec2017limited}
Walter~O Krawec and Eric~P Geiss.
\newblock Limited resource semi-quantum key distribution.
\newblock {\em arXiv preprint arXiv:1710.05076}, 2017.

\bibitem{QKD-renner-keyrate}
Renato Renner, Nicolas Gisin, and Barbara Kraus.
\newblock Information-theoretic security proof for quantum-key-distribution
  protocols.
\newblock {\em Phys. Rev. A}, 72:012332, Jul 2005.

\bibitem{QKD-general-attack}
Matthias Christandl, Robert Konig, and Renato Renner.
\newblock Postselection technique for quantum channels with applications to
  quantum cryptography.
\newblock {\em Phys. Rev. Lett.}, 102:020504, Jan 2009.

\bibitem{QKD-general-attack2}
Renato Renner.
\newblock Symmetry of large physical systems implies independence of
  subsystems.
\newblock {\em Nature Physics}, 3(9):645--649, 2007.

\end{thebibliography}

\end{document}